\def\BibTeX{{\rm B\kern-.05em{\sc i\kern-.025em b}\kern-.08em
    T\kern-.1667em\lower.7ex\hbox{E}\kern-.125emX}}
\newcolumntype{C}{>{\centering\arraybackslash}X} 
\newcolumntype{b}{>{\hsize=2.3\hsize}X}
\theoremstyle{plain}
\newtheorem{theorem}{Theorem}
\newtheorem{lemma}{Lemma}
\newtheorem*{lemma*}{Lemma}
\newtheorem{corollary}{Corollary}
\theoremstyle{definition}
\newtheorem{definition}{Definition}
\theoremstyle{remark}
\newtheorem{remark}{Remark}
\newtheorem{example}{Example}
\newcommand{\X}{\mathcal{X}}
\newcommand{\Y}{\mathcal{Y}}
\newcommand{\E}{\mathbb{E}}
\newcommand{\Pm}{\mathcal{P}}
\newcommand{\Q}{\mathcal{Q}}
\newcommand{\F}{\mathcal{F}}
\newcommand{\W}{\mathcal{W}}
\newcounter{labelcnt}
\renewcommand{\thelabelcnt}{(\alph{labelcnt})}
\newcommand{\setlabel}[1]{%
  \refstepcounter{labelcnt}\ltx@label{lbl:#1}%
  {\text{\upshape\thelabelcnt}}%
}
\def\arxiv{1}
\begin{document}

\title{Lower-bounds on the Bayesian Risk in Estimation Procedures via $f$--Divergences}

\author{
\IEEEauthorblockN{Adrien Vandenbroucque, Amedeo Roberto Esposito, Michael Gastpar}
\IEEEauthorblockA{\textit{School of Computer and Communication Sciences} \\
EPFL, Lausanne, Switzerland\\
adrien.vandenbroucque@alumni.epfl.ch, 
\{amedeo.esposito, michael.gastpar\}@epfl.ch}
}

\maketitle

\begin{abstract}
  We consider the problem of parameter estimation in a Bayesian setting and propose a general lower-bound that includes part of the family of $f$-Divergences. The results are then applied to specific settings of interest and compared to other notable results in the literature. In particular, 
  we show that the known bounds using Mutual Information can be improved by using, for example, Maximal Leakage, Hellinger divergence, or generalizations of the Hockey-Stick divergence.
  
\end{abstract}

\begin{IEEEkeywords}
Bayesian Risk, Parameter Estimation, Information Measures, $f-$Divergences, Mutual Information, Hockey-Stick Divergence
\end{IEEEkeywords}
\section{Introduction}
In this work\if\arxiv0\footnote{A more detailed version of this work including all the proofs can be found in \cite{arxiv_version_of_this_paper}. Moreover, the code for the experiments can be found at \url{https://github.com/Adirlou/f\_divergences\_lower\_bounds}}\fi \,we consider the problem of parameter estimation in a Bayesian setting.
The connection between said problem and information measures has been established multiple times over the years \cite{zhang,bayesRiskRaginsky,CalmonSDPIHockeyStick}.
Here we further develop the perspective undertaken in \cite{bayesRiskRaginsky} and in \cite{bayesRiskIalpha}.
Similarly to \cite{bayesRiskRaginsky} and \cite{bayesRiskIalpha} we will look at the problem through an information-theoretic lens and we will thus treat the parameter to be estimated as a message sent through a channel. The family of bounds one can derive in this framework generally give rise to two objects:
\begin{itemize}
    \item a measure of information (Shannon's Mutual Information was employed in \cite{bayesRiskRaginsky}, Sibson's $\alpha$-Mutual Information in \cite{bayesRiskIalpha}, Hockey-Stick Divergence in \cite{CalmonSDPIHockeyStick}, etc.);
    \item a small-ball probability;
\end{itemize}
The main advantage of this is that both terms can be rendered independent of the specific choice of the estimator, which in turns renders these lower-bounds quite general.
Our main focus will not be on asymptotic results but rather on finite sample lower-bounds. In particular, we will expand upon \cite{bayesRiskIalpha}, utilizing the same approach but focusing on $f-$Divergences rather than on Sibson's Mutual Information.
\section{Background and definitions}
\begin{definition}
Given a function $f:\mathcal{X}\to\mathcal{Y}$, 
	the Legendre-Fenchel transform of $f$ is defined as
	\begin{equation}
    f^\star(x^\star) = \sup_{x\in\X} \langle x^\star , x\rangle - f(x),
\end{equation}
\end{definition}
where $\langle  x^\star,x \rangle$ denotes the natural pairing between a space $\X$ and its topological dual $\X^\star$, \textit{i.e.},  $\langle x^\star,x \rangle= x^\star(x).$
Given a function $f$, $f^\star$ is guaranteed to be lower semi-continuous and convex. If $f$ is convex and lower semi-continuous then $f=f^{\star\star}|_{\X}$ (the restriction of $f^{\star\star}$ on $\X$ agrees with $f$).

\subsection{$f-$Divergences}\label{fDivergence}
A straightforward generalization of the KL-Divergence can be obtained by considering a generic convex function $f:\mathbb{R}\to \mathbb{R}$, usually with the simple constraint that $f(1)=0$. 
\begin{definition}

	Let $(\Omega,\F,\Pm),(\Omega,\F,\Q)$ be two probability spaces. Let $f:\mathbb{R}\to \mathbb{R}$ be a convex function such that $f(1)=0$. Consider a measure $\mu$ such that $\Pm\ll\mu$ and $\Q\ll\mu$ (\textit{i.e.}, $\Pm$ and $\Q$ are absolutely continuous with respect to $\mu$). Denoting with $p,q$ the densities of the measures with respect to $\mu$, the $f-$Divergence of $\Pm$ from $\Q$ is defined as follows:
	\begin{align}
	\label{defFDiv}
	D_f(\Pm\|\Q)=\int q f\left(\frac{p}{q}\right) d\mu.
	\end{align}    
\end{definition}
Note that $f-$divergences are independent from the choice of the dominating measure $\mu$~\cite{fDiv1}. When absolute continuity between $\Pm,\Q$ holds, denoted with $\Pm\ll\Q$ one retrieves the following~\cite{fDiv1}:
\begin{equation}
    D_f(\Pm\|\Q)= \int f\left(\frac{d\Pm}{d\Q}\right)d\Q.
\end{equation}
This generalization includes the KL divergence (by simply setting $f(t)=t\log(t)$), but it also includes:
\begin{itemize}
	\item Total Variation distance, with $f(t)=\frac12|t-1|$;
	\item Hellinger distance, with $f(t)=(\sqrt{t}-1)^2$;
	\item Pearson $\chi^2$-divergence, with $f(t)=(t-1)^2$.
\end{itemize}
In particular, in this paper, we will be interested in two families of divergences. The first family, also known as Hellinger Divergences, is typically characterized by a parameter $p>0$. More precisely, we are referring to the $f$--Divergences that stem from $f_p(t)= \frac{t^p-1}{p-1}$ and that will be denoted as follows:
\begin{equation}
    \mathcal{H}_p(\Pm\|\Q)= D_{f_p}(\Pm\|\Q).
\end{equation}
The second family we consider is characterized by two parameters, namely $\beta >0$ and $\gamma\geq \beta$, and arise from the parametric family of functions $f_{\beta, \gamma}(t) = \max\{0, \beta t-\gamma\}$. We denote it as:
\begin{equation}
    E_{\beta, \gamma}(\Pm\|\Q)= D_{f_{\beta, \gamma}}(\Pm\|\Q).
\end{equation}
For the case $\beta=1$, one retrieves the family of so-called $E_{\gamma}$--Divergences \cite[Eq. (47)]{sason_f_divergences}.

Much like $f$--Divergences, a generalization of Shannon's Mutual Information, denoted in the literature as 
$f$--Mutual Information, can be defined starting from $f$--Divergences as follows:
\begin{definition}\label{def:fMI}
    Let $X$ and $Y$ be two random variables jointly distributed according to $\Pm_{XY}$ over a measurable space $(\X\times\Y, \F_{XY})$. 
 Let  $(\X,\F_{X},\Pm_{X}),(\Y,\F_{Y},\Pm_Y)$ be the corresponding probability spaces induced by the marginals.  Let $f:\mathbb{R}\to \mathbb{R}$ be a convex function such that $f(1)=0$. The $f$--Mutual Information between $X$ and $Y$ is defined as:
		\begin{equation}
	I_f(X,Y)=D_f(\Pm_{XY}\|\Pm_X\Pm_Y).
	\end{equation}
\end{definition}
If $f$ is strictly convex at $1$ and satisfies $f(1)=0$, then  $I_f(X,Y)=0$ if and only if $X$ and $Y$ are independent~\cite[Theorem 5]{fDiv1}. Choosing $f(t)=t\log t$, one recovers the Mutual Information. With a slight abuse of notation, we will denote $f$--Mutual Informations with the same symbols used to characterize the corresponding divergences, \textit{e.g.}, $\mathcal{H}_p(X,Y)=\mathcal{H}_p(\Pm_{XY}\|\Pm_X\Pm_Y)$ will represent the $f_p$--Mutual Information, while $E_{\beta, \gamma}(X,Y)=E_{\beta, \gamma}(\Pm_{XY}\|\Pm_X\Pm_Y)$ will represent the $f_{\beta, \gamma}$--Mutual Information.

\subsection{Problem Setting - the Bayesian framework} \label{sec:bayesianFramework}
Let $\mathcal{W}$ denote the parameter space and assume that we have access to a prior distribution over this space $\mathcal{P}_W$. Suppose then that we observe $W$ through the family of distributions $\mathcal{P}= \{ \mathcal{P}_{X|W=w}: w\in\mathcal{W} \}.$ Given a function $\phi:\mathcal{X}\to\mathcal{W}$ one can then estimate $W$ from $X\sim \mathcal{P}_{X|W}$ via $\phi(X)=\hat{W}$. Let us denote with $\ell:\mathcal{W}\times\mathcal{W}\to \mathbb{R}^+$ a loss function, the Bayesian risk is defined as:
\begin{equation}
    R = \inf_\phi\mathbb{E}[\ell(W,\phi(X)] = \inf_\phi\mathbb{E}[\ell(W,\hat{W})].\label{risk}
\end{equation}
Our purpose will be to lower-bound $R$ using the tools described in the previous section. To this end, we will be using a simple Markov's inequality approach: \textit{i.e.}, for every estimator $\phi$ and $\rho \geq 0$, one can do the following
\begin{equation}
    \mathbb{E}[\ell(W,\hat{W})] \geq \rho\left(P_{W\hat{W}}(\ell(W,\hat{W})\geq \rho)\right). \label{markov}
\end{equation}
With further manipulations we can actually relate $\mathbb{P}(\ell(W,\hat{W})\geq \rho)$ to the information-measures described before and some function $\psi$ of $P_WP_{\hat{W}}(\ell(W,\hat{W})\geq \rho)$ (the measure of $\{\ell(W,\hat{W})\geq \rho\}$ under the product of the marginals $P_WP_{\hat{W}}$).
Let us denote $P_W P_{\hat{W}}(\ell(W,\hat{W})\leq \rho)=L_W(\hat{W},\rho)$.
In some cases, this $\psi$ will lead us to considering
the so-called small-ball probability  \begin{equation}L_W(\rho)= \sup_{\hat{w}\in\mathcal{\hat{W}}} L_W(\hat{w},\rho) =  \sup_{\hat{w}\in\mathcal{\hat{W}}}\mathbb{P}(\ell(W,\hat{w})\leq \rho).\label{smallBall}\end{equation}
The purpose is to render both of these quantities independent of $\phi$, granting us the tools to provide general lower-bounds on the risk $R$.
\subsection{Related
Works} A survey of early works in this area, mainly focusing on asymptotic settings, can be found in \cite{han}. More recent but important advances are instead due to \cite{zhang, nipsHideAndSeek}.
Closely connected to this work is \cite{bayesRiskRaginsky}. The approach is quite similar, with the main difference that we employ a family of bounds involving a variety of divergences while ~\cite{bayesRiskRaginsky} relies solely on Mutual Information and the Kullback-Leibler Divergence.~\cite{bayesRiskIalpha} focuses on Sibson's $\alpha$-Mutual Information, and \cite{CalmonSDPIHockeyStick} uses the $E_\gamma$-Divergence. A similar approach was also undertaken in~\cite{bayesRiskFInformativity}. The authors focused on the notion of $f-$informativity (cf.~\cite{fInformativity}) and leveraged the data processing inequality similarly to~\cite[Theorem 3]{fullVersionGeneralization}. In particular, $f-$informativities are more general than the $f-$Mutual Informations considered in this work (cf. Definition~\ref{def:fMI}) and they can potentially lead to tighter results. The technique used to provide lower-bounds on the Bayesian risk for general non-negative losses (cf.~\cite[Section 4]{bayesRiskFInformativity}) is, however, different. It is unclear whether the results provided in this work are equivalent (or weaker) with respect to those obtained in~\cite{bayesRiskFInformativity}. 
\section{The lower bounds}
Let us start with our main result and then show how it is connected to the Bayesian Risk.
\begin{theorem}
\label{theorem:f_divergences_bound}
Consider the Bayesian framework described in Sec.~\ref{sec:bayesianFramework}. Let $f:[0, +\infty) \to \mathbb{R}$ be an increasing convex function such that $f(1)=0$ and suppose that the generalized inverse, defined as $f^{-1}(y) = \inf\{t\geq 0 : f(t) > y\}$, exists. Then the following must hold for every $\rho>0$ and every estimator $\hat{W}$:
\begin{multline}
    \E[\ell(W, \hat{W})] \geq \rho\Bigg(1- L_W(\hat{W},\rho)\cdot \\ f^{-1}\left(\frac{I_f(W, \hat{W})+(1-L_W(\hat{W},\rho))f^\star(0)}{L_W(\hat{W},\rho)}\right)\Bigg). \label{fDivLowerBound}
\end{multline}
Moreover, if $f^\star(0) \leq 0$, the bound simplifies to
\begin{align}
    \E[\ell(W, \hat{W})] \geq \rho\left(1- L_W(\hat{W},\rho)\cdot f^{-1}\left(\frac{I_f(W, \hat{W})}{L_W(\hat{W},\rho)}\right)\right). \label{fDivLowerBoundSimplified}
\end{align}
\end{theorem}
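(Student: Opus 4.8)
The plan is to start from Markov's inequality \eqref{markov}, which gives, for every estimator $\hat W$ and $\rho>0$,
\[
\E[\ell(W,\hat W)] \geq \rho\, P_{W\hat W}(\ell(W,\hat W)\geq\rho) = \rho\big(1 - P_{W\hat W}(\ell(W,\hat W)<\rho)\big),
\]
so the whole game reduces to \emph{upper-bounding} the joint probability $P_{W\hat W}(\ell(W,\hat W)<\rho)$ in terms of the $f$--mutual information $I_f(W,\hat W)$ and the product-measure quantity $L_W(\hat W,\rho)=P_WP_{\hat W}(\ell(W,\hat W)\leq\rho)$. The natural tool here is the variational (Legendre--Fenchel) representation of $f$--divergences: for any event $\mathcal E$ (here $\mathcal E=\{\ell(W,\hat W)<\rho\}$) and any $\lambda\in\mathbb R$,
\[
\lambda\, P_{W\hat W}(\mathcal E) \;\leq\; I_f(W,\hat W) \;+\; \E_{P_WP_{\hat W}}\!\big[f^\star(\lambda\,\mathbbm 1_{\mathcal E})\big],
\]
which follows from $f(u)=f^{\star\star}(u)\geq \lambda u - f^\star(\lambda)$ applied pointwise to the density ratio $u=dP_{W\hat W}/d(P_WP_{\hat W})$ and integrating against $P_WP_{\hat W}$. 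Evaluating the right-hand expectation, $\E_{P_WP_{\hat W}}[f^\star(\lambda\mathbbm 1_{\mathcal E})] = L_W(\hat W,\rho) f^\star(\lambda) + (1-L_W(\hat W,\rho)) f^\star(0)$, because $f^\star(\lambda\cdot 0)=f^\star(0)$.

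**Choosing the dual parameter and inverting.** Rearranging the displayed inequality for $\lambda\geq 0$ gives
\[
P_{W\hat W}(\mathcal E) \;\leq\; \frac{I_f(W,\hat W) + (1-L_W(\hat W,\rho)) f^\star(0)}{\lambda} \;+\; L_W(\hat W,\rho)\,\frac{f^\star(\lambda)}{\lambda}.
\]
The remaining task is to choose $\lambda$ cleverly. The intended trick is the substitution $\lambda = f(t)$ for $t\geq 1$ (legitimate since $f$ is increasing with $f(1)=0$, so $f(t)\geq 0$ on $[1,\infty)$) together with the elementary bound $f^\star(f(t))\leq t f(t) - f(t)$... more precisely one uses that $f^\star(\lambda)/\lambda$ can be controlled so that the bound collapses to $L_W(\hat W,\rho)\cdot t$ at the optimizing $t$; tracking this, the right-hand side becomes $\leq L_W(\hat W,\rho)\cdot t$ precisely when $t$ satisfies $f(t) \geq \big(I_f(W,\hat W)+(1-L_W(\hat W,\rho))f^\star(0)\big)/L_W(\hat W,\rho)$. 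Taking the infimum over such $t$ and recognizing the generalized inverse $f^{-1}(y)=\inf\{t\geq 0: f(t)>y\}$ yields
\[
P_{W\hat W}(\ell(W,\hat W)<\rho) \;\leq\; L_W(\hat W,\rho)\cdot f^{-1}\!\left(\frac{I_f(W,\hat W)+(1-L_W(\hat W,\rho))f^\star(0)}{L_W(\hat W,\rho)}\right),
\]
and substituting back into Markov gives \eqref{fDivLowerBound}. The simplified form \eqref{fDivLowerBoundSimplified} then follows immediately: when $f^\star(0)\leq 0$, the term $(1-L_W(\hat W,\rho))f^\star(0)$ is $\leq 0$, and since $f^{-1}$ is non-decreasing, dropping it only increases the argument, so replacing it by $0$ preserves the upper bound on $P_{W\hat W}(\mathcal E)$ (hence the lower bound on the risk).

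**Main obstacle.** The routine parts are Markov's inequality and the pointwise Fenchel bound; the delicate step is the manipulation with $\lambda=f(t)$ and the precise handling of $f^\star$ so that the resulting expression is exactly the generalized inverse $f^{-1}$ rather than some looser surrogate. In particular one must be careful that $f^\star(0)=\sup_{t\geq 0}(-f(t))=-\inf_{t\geq 0}f(t)$ is finite (or at least that the bound is vacuous otherwise), that the generalized inverse is well-defined and monotone on the relevant range, and that the infimum-over-$t$ argument correctly produces $f^{-1}$ applied to the stated ratio — edge cases (e.g. the ratio lying outside the range of $f$, or $L_W(\hat W,\rho)\in\{0,1\}$) need a brief separate check. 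I would also verify the direction of all inequalities once the dual variable's sign is fixed, since $f$ increasing forces us to work with $t\geq 1$ and $\lambda\geq 0$, and a sign slip there would flip the bound.
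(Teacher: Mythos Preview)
Your overall architecture matches the paper exactly: Markov's inequality \eqref{markov} reduces the problem to upper-bounding $P_{W\hat W}(\ell(W,\hat W)\leq\rho)$ by the expression on the right of \eqref{lastStepProofMainThm}. The paper simply \emph{cites} that upper bound as \cite[Theorem~3]{fullVersionGeneralization}; you instead try to prove it from scratch via the Fenchel--Young inequality, which is a reasonable thing to do. Your variational setup is correct: applying $\lambda' u - f^\star(\lambda')\leq f(u)$ pointwise with $\lambda'=\lambda\mathbbm 1_{\mathcal E}$ and integrating against $P_WP_{\hat W}$ indeed gives
\[
\lambda\, p \;\leq\; I_f(W,\hat W) + L\, f^\star(\lambda) + (1-L)\, f^\star(0),
\]
where $p=P_{W\hat W}(\mathcal E)$ and $L=L_W(\hat W,\rho)$.

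The gap is in the step where you pass from this inequality to the generalized inverse. The substitution $\lambda=f(t)$ together with ``$f^\star(f(t))\leq t f(t)-f(t)$'' is not an elementary bound; in fact the inequality goes the other way, since $f^\star(y)=\sup_s(ys-f(s))\geq yt-f(t)$ for every $t$. You seem to sense this (``more precisely\dots'') but never supply the correct argument. The clean fix is biconjugacy: since $f$ is convex, $f(t)=f^{\star\star}(t)=\sup_\lambda\big(t\lambda-f^\star(\lambda)\big)$, so for any $t$ with $f(t)>y:=\big(I_f+(1-L)f^\star(0)\big)/L$ there exists $\lambda$ with $t\lambda-f^\star(\lambda)>y$, i.e.\ $L(t\lambda-f^\star(\lambda))>I_f+(1-L)f^\star(0)$, which combined with the displayed inequality gives $\lambda p<Lt\lambda$ and hence $p\leq Lt$ (taking $\lambda>0$, guaranteed since $f$ is increasing). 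Taking the infimum over such $t$ yields exactly $p\leq L\cdot f^{-1}(y)$ with the stated generalized inverse. Equivalently and perhaps more transparently, one can bypass the dual variable entirely: apply the data-processing inequality for $f$--divergences to the binary kernel $(w,\hat w)\mapsto\mathbbm 1_{\mathcal E}$ to get $I_f(W,\hat W)\geq L\,f(p/L)+(1-L)\,f\!\big((1-p)/(1-L)\big)$, then lower-bound the second term by $-(1-L)f^\star(0)$ (since $f^\star(0)=-\inf_{s\geq 0}f(s)$) and invert. Your derivation of the simplified form \eqref{fDivLowerBoundSimplified} from \eqref{fDivLowerBound} is correct.
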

\begin{proof}
To prove the statement we use \cite[Theorem 3]{fullVersionGeneralization}. In our notation, it states that for every function $f$ with the desired properties, we have 
\begin{align}
    P_{W\hat{W}}(\ell(W, \hat{W})&\leq \rho) \leq L_W(\hat{W}, \rho) \cdot\\& f^{-1}\left(\frac{I_f(W, \hat{W})+(1-L_W(\hat{W},\rho))f^\star(0)}{L_W(\hat{W},\rho)}\right).  \label{lastStepProofMainThm}
\end{align}
In particular when $f^\star(0)\leq 0$, the bound reduces to
\begin{align}
    P_{W\hat{W}}(\ell(W, \hat{W})\leq \rho) \leq L_W(\hat{W}, \rho) \cdot f^{-1}\left(\frac{I_f(W, \hat{W})}{L_W(\hat{W},\rho)}\right).
\end{align}
Rewriting $P_{W\hat{W}}(\ell(W, \hat{W})\geq \rho)$ as $1-P_{W\hat{W}}(\ell(W, \hat{W})\leq \rho)$ and combining this with Equations \eqref{markov} and \eqref{lastStepProofMainThm} concludes the proof.
\end{proof}
In order to provide a lower-bound on the Bayesian Risk, one needs to render the right-hand side of Equations \eqref{fDivLowerBound} (or \eqref{fDivLowerBoundSimplified}) independent of $\hat{W}=\phi(X)$ and, in order to do that, one needs to render independent of $\hat{W}$:
\begin{enumerate}
    \item The information-measure, \textit{e.g.}, through the data-processing inequality $I_f(W, \hat{W}) \leq I_f(W, X)$; \label{ineq1}
    \item The quantity $L_W(\hat{W}, \rho)$, that can be easily upper-bounded in the following way: $L_W(\hat{W}, \rho) \leq \sup_{\hat{w}} L_W(\hat{w}, \rho) = L_W(\rho)$. \label{ineq2}
\end{enumerate}
For simplicity, consider Equation \eqref{fDivLowerBoundSimplified} and introduce the following object
\begin{equation}
    G_f(I_f, L_W) := L_W(\hat{W},\rho)\cdot f^{-1}\left(\frac{I_f(W, \hat{W})}{L_W(\hat{W},\rho)}\right).
\end{equation}
To use the two inequalities just stated above in items \ref{ineq1}) and \ref{ineq2}), one thus needs that for a given choice of $f$, $G_f(I_f, L_W)$ is increasing in $I_f$ for a given value of $L_W$ and vice-versa. This allows us to further lower-bound \eqref{fDivLowerBoundSimplified} and render the quantity independent of the specific choice of $\phi$. Hence, starting from \eqref{risk} one can provide a lower-bound on the risk $R$ that is independent of $\phi$.
Let us now look at some specific choices of $f$ such that $G_f$ satisfies the desired properties and for which a bound on the Bayesian risk can indeed be retrieved. 
\begin{corollary}\label{lowerBoundHellinger}
Consider the Bayesian framework described in Sec. \ref{sec:bayesianFramework}. The following must hold for every $p>1$ and $\rho>0$:
\begin{equation}
     R \geq \rho\left(1- L_W(\rho)^{\frac{p-1}{p}}\cdot  \left((p-1) \mathcal{H}_p(W,X) + 1\right) ^\frac{1}{p} \right).
\end{equation}
\end{corollary}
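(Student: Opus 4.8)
The plan is to specialize Theorem~\ref{theorem:f_divergences_bound} to the Hellinger case $f = f_p$ with $p > 1$, and then carry out the two independence reductions outlined after the theorem. First I would check the hypotheses: for $p>1$, the function $f_p(t) = \frac{t^p - 1}{p-1}$ is convex, increasing on $[0,\infty)$, and satisfies $f_p(1) = 0$, so the theorem applies. Next I would compute $f_p^\star(0)$: since $f_p^\star(0) = \sup_{t \geq 0}\bigl(-f_p(t)\bigr) = -\inf_{t\geq 0} f_p(t) = -f_p(0) = \frac{1}{p-1} \cdot (-1) \cdot(-1)$, wait — more carefully, $f_p(0) = \frac{0 - 1}{p-1} = \frac{-1}{p-1} < 0$, so $\sup_{t\geq 0}(-f_p(t)) = -f_p(0) = \frac{1}{p-1} > 0$. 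So $f_p^\star(0) > 0$ and we are \emph{not} in the simplified regime \eqref{fDivLowerBoundSimplified}; we must use the general bound \eqref{fDivLowerBound}. This is the step I expect to require the most care, since a naive reading might tempt one to drop the $f^\star(0)$ term.

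Then I would compute the generalized inverse: solving $f_p(t) = y$ gives $t = \bigl((p-1)y + 1\bigr)^{1/p}$, so $f_p^{-1}(y) = \bigl((p-1)y + 1\bigr)^{1/p}$ for $y$ in the appropriate range. Plugging $f^\star(0) = \tfrac{1}{p-1}$ and abbreviating $L = L_W(\hat W, \rho)$ into \eqref{fDivLowerBound}, the argument of $f_p^{-1}$ becomes $\frac{\mathcal{H}_p(W,\hat W) + (1-L)/(p-1)}{L}$, and then
\begin{equation*}
L \cdot f_p^{-1}\!\left(\frac{\mathcal{H}_p + (1-L)/(p-1)}{L}\right) = L\left((p-1)\cdot\frac{\mathcal{H}_p + (1-L)/(p-1)}{L} + 1\right)^{1/p} = L^{\frac{p-1}{p}}\bigl((p-1)\mathcal{H}_p + 1\bigr)^{1/p},
\end{equation*}
where the $(1-L)/(p-1)$ term cancels neatly against the $+1$ inside. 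This algebraic cancellation is the crux and explains why the corollary's statement is so clean despite $f_p^\star(0) \neq 0$.

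Finally I would perform the two monotonicity reductions. The quantity $G_{f_p} = L^{(p-1)/p}\bigl((p-1)\mathcal{H}_p(W,\hat W) + 1\bigr)^{1/p}$ is manifestly increasing in both $\mathcal{H}_p(W,\hat W)$ and in $L$ (for $p>1$ both exponents are positive). Hence by the data-processing inequality $\mathcal{H}_p(W,\hat W) \leq \mathcal{H}_p(W, X)$ (valid since $\hat W = \phi(X)$ and $\mathcal{H}_p$ is an $f$-divergence) and by $L_W(\hat W,\rho) \leq L_W(\rho)$, we may replace both terms, making the right-hand side independent of $\phi$. Taking the infimum over $\phi$ on the left via \eqref{risk} and \eqref{markov} then yields
\begin{equation*}
R \geq \rho\left(1 - L_W(\rho)^{\frac{p-1}{p}}\bigl((p-1)\mathcal{H}_p(W,X) + 1\bigr)^{1/p}\right),
\end{equation*}
which is the claim. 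The main obstacle, as noted, is correctly handling the nonzero $f_p^\star(0)$ and verifying that its contribution telescopes away inside the generalized inverse; everything else is routine.
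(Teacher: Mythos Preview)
Your proposal is correct and follows essentially the same route as the paper: compute $f_p^\star(0)=\tfrac{1}{p-1}$ and $f_p^{-1}(y)=((p-1)y+1)^{1/p}$, plug into \eqref{fDivLowerBound}, observe the same cancellation that collapses the bound to $L^{(p-1)/p}\bigl((p-1)\mathcal{H}_p+1\bigr)^{1/p}$, and then apply the data-processing inequality together with $L_W(\hat W,\rho)\leq L_W(\rho)$ to render the right-hand side estimator-free. The only difference is presentational (you spell out the monotonicity check explicitly), not substantive.
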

\begin{proof}
Since $f(x) = \frac{x^p-1}{p-1}$, we have that $f^\star(0) = \sup_{x \geq 0} (- f(x)) = \frac{1}{p-1}$ and $f^{-1}(t) = ((p-1)t + 1)^\frac1p$.

For every estimator $\hat{W}$,
\begin{align}
    &L_W(\hat{W},\rho)\cdot f^{-1}\left(\frac{I_f(W, \hat{W})+(1-L_W(\hat{W},\rho))f^\star(0)}{L_W(\hat{W},\rho)}\right)  
    \\&= L_W(\hat{W}, \rho)\left(\frac{(p-1)\mathcal{H}_p(W, \hat{W})+1}{L_W(\hat{W}, \rho)}\right)^{\frac1p} \\&= L_W(\hat{W}, \rho)^{\frac{p-1}{p}}\left((p-1)\mathcal{H}_p(W, \hat{W})+1\right)^{\frac1p} \\&\leq L_W(\rho)^{\frac{p-1}{p}}\left((p-1)\mathcal{H}_p(W, X)+1\right)^{\frac1p}, \label{lastStepProof1}
\end{align}
where in \eqref{lastStepProof1} we used the data-processing inequality for $f$--divergences. Using \eqref{lastStepProof1} with Theorem \ref{theorem:f_divergences_bound}, we retrieve that for every estimator $\hat{W}$
\begin{equation}
    \mathbb{E}[\ell(W, \hat{W})] \geq\rho\!\left(1-L_W(\rho)^{\frac{p-1}{p}}\left((p-1)\mathcal{H}_p(W, X)+1\right)^{\frac1p}\right). \label{lowerBoundTemp}
\end{equation}
Since the right-hand side of \eqref{lowerBoundTemp} is independent of $\hat{W}=\phi(X)$ one can use it to lower-bound the risk $R$. 
\end{proof}
Restricting the choice of $f$ to this family of polynomials we can thus state the following lower-bound on the risk:
\begin{equation}\label{equation:lowerBoundHellinger}
    R \geq \sup_{\rho > 0} \sup_{p > 1}\rho\!\left(\!1- L_W(\rho)^{\frac{p-1}{p}}\!\!\cdot  \left((p-1) \mathcal{H}_p(W,\hat{W}) + 1\right)\! ^\frac{1}{p}\! \right).
\end{equation}
\begin{remark}
Using the one-to-one mapping connecting Hellinger divergences and R\'enyi's $\alpha-$Divergence \cite[Eq. (30)]{sason_f_divergences}, the bound above can be re-written as follows:
\begin{align}
     R \geq \sup_{\rho >0} \sup_{\alpha>1} \rho&\bigg(1- L_W(\rho)^{\frac{\alpha-1}{\alpha}}\cdot 
     \notag \\  &\exp\left(\frac{\alpha-1}{\alpha} D_\alpha(P_{W\hat{W}}\|P_WP_{\hat{W}})\right) \bigg).
\end{align}
\end{remark}
In addition, given the generality of Theorem \ref{theorem:f_divergences_bound} we can also recover other notable results present in the literature (cf.~\cite[Remark 1]{CalmonSDPIHockeyStick}) through the following:
\begin{corollary}\label{lowerBoundHockeyStick}
Consider the Bayesian framework described in Sec. \ref{sec:bayesianFramework}. The following must hold for every $\beta > 0$, $\gamma\geq \beta$, and $\rho>0$:
\begin{equation}\label{equation:hockey_stick_general_bound}
     R \geq \rho\left(1 - \frac{E_{\beta, \gamma}(W,\hat{W}) + \gamma L_W(\rho)}{\beta}\right).
\end{equation}
\end{corollary}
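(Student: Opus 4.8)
The plan is to instantiate Theorem~\ref{theorem:f_divergences_bound} with $f=f_{\beta,\gamma}$, $f_{\beta,\gamma}(t)=\max\{0,\beta t-\gamma\}$, exactly as Corollary~\ref{lowerBoundHellinger} does for the Hellinger family. First I would verify the hypotheses: $f_{\beta,\gamma}$ is convex as a pointwise maximum of two affine functions, nondecreasing since $\beta>0$, and $f_{\beta,\gamma}(1)=\max\{0,\beta-\gamma\}=0$ precisely because the assumption $\gamma\geq\beta$ forces $\beta-\gamma\leq0$. Since $f_{\beta,\gamma}\geq0$ everywhere with $f_{\beta,\gamma}(0)=0$, we get $f^\star(0)=\sup_{t\geq0}\bigl(-f_{\beta,\gamma}(t)\bigr)=0\leq0$, so the simplified inequality~\eqref{fDivLowerBoundSimplified} is the right starting point. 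Finally I would compute the generalized inverse: $f_{\beta,\gamma}\equiv0$ on $[0,\gamma/\beta]$ and equals the strictly increasing $\beta t-\gamma$ on $(\gamma/\beta,\infty)$, so for every $y\geq0$ one has $\{t\geq0:f_{\beta,\gamma}(t)>y\}=\bigl((y+\gamma)/\beta,\infty\bigr)$ and hence $f_{\beta,\gamma}^{-1}(y)=(y+\gamma)/\beta$.

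Substituting these into~\eqref{fDivLowerBoundSimplified}, writing $L=L_W(\hat W,\rho)$ and using $I_{f_{\beta,\gamma}}(W,\hat W)=E_{\beta,\gamma}(W,\hat W)\geq0$, the subtracted term collapses nicely:
\[
L\cdot f_{\beta,\gamma}^{-1}\!\left(\frac{E_{\beta,\gamma}(W,\hat W)}{L}\right)=L\cdot\frac{1}{\beta}\left(\frac{E_{\beta,\gamma}(W,\hat W)}{L}+\gamma\right)=\frac{E_{\beta,\gamma}(W,\hat W)+\gamma L}{\beta},
\]
so the $L$ in the denominator cancels. The resulting map $(a,b)\mapsto(a+\gamma b)/\beta$ is affine and strictly increasing in each argument (as $\beta,\gamma>0$), which is exactly the monotonicity of $G_f$ discussed after Theorem~\ref{theorem:f_divergences_bound}: it licenses replacing $E_{\beta,\gamma}(W,\hat W)$ by $E_{\beta,\gamma}(W,X)$ through the data-processing inequality for $f$-divergences and $L_W(\hat W,\rho)$ by $L_W(\rho)$ through~\eqref{smallBall}. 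Since both substitutions only enlarge the subtracted term, combining with Theorem~\ref{theorem:f_divergences_bound} and~\eqref{markov} gives $\E[\ell(W,\hat W)]\geq\rho\bigl(1-(E_{\beta,\gamma}(W,X)+\gamma L_W(\rho))/\beta\bigr)$; the right-hand side no longer depends on $\hat W=\phi(X)$, so taking the infimum over $\phi$ in~\eqref{risk} yields the claimed lower bound on $R$.

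I do not expect a serious obstacle: the argument is a direct specialization of the template already used for the Hellinger case, and the arithmetic is benign. The points needing genuine care are the bookkeeping of the hypotheses — recognizing that $\gamma\geq\beta$ is exactly what makes $f_{\beta,\gamma}(1)=0$, and that $f^\star(0)=0$ so that the clean form~\eqref{fDivLowerBoundSimplified} applies — and checking monotonicity of $G_f$ so the data-processing and small-ball replacements are legitimate. It is also worth disposing of the degenerate case $L=0$: there~\eqref{lastStepProofMainThm} forces $P_{W\hat W}(\ell(W,\hat W)\leq\rho)=0$, so $\E[\ell(W,\hat W)]\geq\rho$ and the stated inequality holds trivially since its right-hand side is at most $\rho$.
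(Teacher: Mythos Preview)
Your proposal is correct and follows essentially the same route as the paper: instantiate Theorem~\ref{theorem:f_divergences_bound} with $f_{\beta,\gamma}$, check $f^\star(0)=0$ so that the simplified form~\eqref{fDivLowerBoundSimplified} applies, compute the generalized inverse $f_{\beta,\gamma}^{-1}(y)=(y+\gamma)/\beta$, and then use the data-processing inequality together with $L_W(\hat W,\rho)\le L_W(\rho)$ to remove the dependence on $\hat W$. Your write-up is in fact more thorough than the paper's, since you explicitly verify convexity, monotonicity, the role of the hypothesis $\gamma\ge\beta$ in ensuring $f_{\beta,\gamma}(1)=0$, and you dispose of the degenerate case $L=0$.
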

\begin{proof}
We take the same approach as in Corollary \ref{lowerBoundHellinger}. Let $f(x) = \max\{0, \beta x-\gamma\}$, consequently one has that $f^\star(0) = \sup_{x \geq 0} (- f(x)) = 0$ and that the generalized inverse corresponds to $f^{-1}(t) = \frac{t+\gamma}{\beta}$. Using Theorem \ref{theorem:f_divergences_bound}, along with the fact that $f^\star(0) \leq 0$ we have that for every estimator $\hat{W}$,
\begin{align}
    \E[\ell(W, \hat{W})] &\geq  
    \rho\left(1-\frac{E_{\beta,\gamma}(W, \hat{W}) + \gamma L_W(\hat{W}, \rho)}{\beta}\right) \\ 
    &\geq \rho\left(1-\frac{E_{\beta, \gamma}(W, X) + \gamma L_W(\rho)}{\beta}\right). \label{equation:e_gamma}
\end{align}
Since \eqref{equation:e_gamma} is independent of $\hat{W}=\phi(X)$ one can use it to lower-bound the risk $R$.
\end{proof}
We thus retrieve the following lower-bound on the risk:
\begin{equation}\label{equation:lowerBoundHockeyStick}
     R \geq \sup_{\rho>0}\sup_{\beta>0, \gamma \geq \beta}\rho\left(1 - \frac{E_{\beta, \gamma}(W,\hat{W}) + \gamma L_{W}(\rho)}{\beta}\right).
\end{equation}
\begin{remark}
Note that setting $\beta = 1$ \eqref{equation:hockey_stick_general_bound} recovers the result in \cite[Remark 1]{CalmonSDPIHockeyStick}.
In fact, by introducing an additional degree of freedom through the $\beta$ parameter in Equation \eqref{equation:lowerBoundHockeyStick}, the resulting lower-bound can only be tighter than \cite[Remark 1]{CalmonSDPIHockeyStick}.
\end{remark}
\section{Examples}
In this section we apply Corollaries \ref{lowerBoundHellinger} and \ref{lowerBoundHockeyStick} to two classical estimation settings. The resulting lower-bounds are then compared with those obtained in
\cite{bayesRiskIalpha} involving Sibson's $\alpha$-Mutual Information and Maximal Leakage and with those in \cite{bayesRiskRaginsky} involving Shannon's Mutual Information and Maximal Leakage.

Ultimately, for each example, we would like to compare the tightest versions of our bounds, which are given by Equation \eqref{equation:lowerBoundHellinger} for the $\mathcal{H}_p$--Divergence and \eqref{equation:lowerBoundHockeyStick} for the $E_{\beta,\gamma}$--Divergence. However, since their computations involve a maximization problem over some parameters ($p$ or $\beta, \gamma$) that we cannot analytically solve, we compute these lower-bounds only for specific values of the parameters. The choice of parameters we use might seem arbitrary but it correctly captures the behavior of the bounds. Indeed, experiments show that when solving the maximization over $p$ or $\beta, \gamma$ (\textit{e.g.}, through the \texttt{scipy.optimize.minimize} function from the Python library SciPy) the same behaviors are observed, like Figure \ref{fig:bernoulli_best_params} shows in the context of Example \ref{bernoulliBias}.
\begin{figure}
    \centering
    \includegraphics[width=0.48\textwidth]{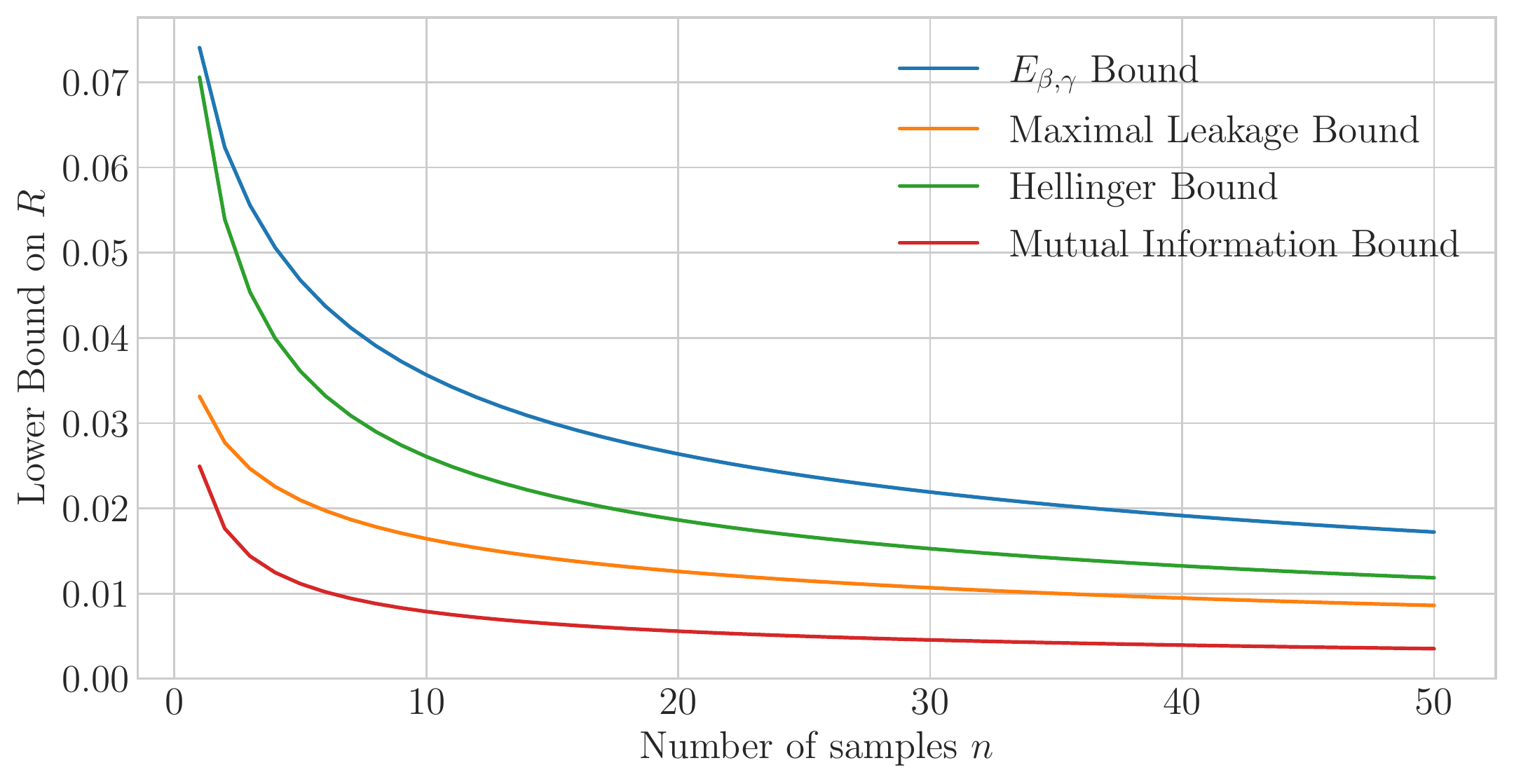}
    \caption{Setting: Example \ref{bernoulliBias}. Comparison between the largest lower-bounds one can retrieve for different information measures in Example \ref{bernoulliBias}: that is between \eqref{equation:lowerBoundHockeyStick},  \eqref{equation:lower_bound_hellinger_bernoulli}, \cite[Eq. (16)]{bayesRiskIalpha} and \cite[Corollary 2, Eq. (19)]{bayesRiskRaginsky}. The quantities are analytically maximized over $\rho$ (cf.\if\arxiv1 Appendix \ref{appendix:optimization_rho}\else \cite[Appendix A]{arxiv_version_of_this_paper}\fi) and numerically optimized over, respectively, $p>1$, $\beta>0$, and $\gamma\geq \beta$.
    }
    \label{fig:bernoulli_best_params}
\end{figure}

\subsection{Example 1: Bernoulli Bias Estimation}\label{sec:ex1}
\begin{example}\label{bernoulliBias}
Suppose that $W \sim U[0,1]$ and that for each $i\in[n]$, $X_i|\{W\!=\!w\} \sim \text{Ber}(w)$. Also, assume that $\ell(w,\hat{w}) = |w-\hat{w}|$.
\end{example}

We first provide a closed-form expression of the lower-bound resulting from Corollary \ref{lowerBoundHellinger} for a specific choice of $p$ which enables to match the upper-bound up to a constant factor. In fact in general, the tightest bound in this family comes from Equation \eqref{equation:lowerBoundHellinger} and can, in this example, be stated as follows:
\begin{equation}\label{equation:lower_bound_hellinger_bernoulli}
    R \geq \sup_{\rho > 0} \sup_{p > 1}\rho\!\left(\!1- (2\rho)^{\frac{p-1}{p}}\!\!\cdot  \left((p-1) \mathcal{H}_p(W,X^n) + 1\right)\! ^\frac{1}{p}\! \right).
\end{equation}
The value of $\mathcal{H}_p(W,X^n)$ for this setting is expressed in the following Lemma. \begin{lemma}\label{lemma:hellinger_bernoulli}
Consider the setting described in Example \ref{bernoulliBias}. Then for every $p>1$,
\begin{align}
    (p-1)(&\mathcal{H}_p(W,X^n)+1) \notag\\&= (n+1)^{p-1}\sum_{k=0}^n \binom{n}{k}^p\frac{\Gamma(kp+1)\Gamma((n-k)p+1)}{\Gamma(np+2)}.
\end{align}
In particular with $p=2$, one recovers:
\begin{equation}
    \chi^2(W,X^n)+1 = \frac{n+1}{2n+1}\cdot \frac{4^n}{\binom{2n}{n}}\leq \frac{16\sqrt{\pi n}}{21}. \label{H2Bernoulli}
\end{equation}
\end{lemma}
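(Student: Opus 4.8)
The plan is to compute $\mathcal{H}_p(W, X^n) = D_{f_p}(P_{WX^n} \| P_W P_{X^n})$ directly from the definition, exploiting the fact that $f_p(t) = \frac{t^p - 1}{p-1}$, so that
\begin{equation}
(p-1)\big(\mathcal{H}_p(W,X^n) + 1\big) = \int \left(\frac{dP_{WX^n}}{dP_W P_{X^n}}\right)^{p} dP_W P_{X^n} = \E_{P_W P_{X^n}}\!\left[\left(\frac{dP_{WX^n}}{dP_W P_{X^n}}\right)^{p}\right]. \notag
\end{equation}
First I would identify the relevant densities. Since $W \sim U[0,1]$ and $X^n \mid \{W = w\}$ is i.i.d.\ $\mathrm{Ber}(w)$, the joint law has density (w.r.t.\ the product of Lebesgue measure on $[0,1]$ and counting measure on $\{0,1\}^n$) equal to $w^{k}(1-w)^{n-k}$ where $k = \sum_i x_i$. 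The marginal of $X^n$ at a string with $k$ ones is $\int_0^1 w^k(1-w)^{n-k}\,dw = B(k+1, n-k+1) = \frac{k!(n-k)!}{(n+1)!}$, which depends only on $k$. Hence the Radon--Nikodym derivative is $\frac{dP_{WX^n}}{dP_W P_{X^n}}(w, x^n) = \frac{w^k(1-w)^{n-k}}{B(k+1,n-k+1)} = (n+1)\binom{n}{k} w^k (1-w)^{n-k}$.

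Next I would raise this to the $p$-th power and integrate against $P_W P_{X^n}$. Because the integrand depends on $x^n$ only through $k$, and there are $\binom{n}{k}$ strings with $k$ ones each of $P_{X^n}$-mass $B(k+1, n-k+1)$, the sum over $x^n$ collapses to $\sum_{k=0}^n \binom{n}{k} B(k+1,n-k+1)\,(\cdot)$. Combining the powers of $\binom{n}{k}$ and $(n+1)$ carefully, the expression becomes
\begin{equation}
(p-1)\big(\mathcal{H}_p(W,X^n)+1\big) = (n+1)^{p-1} \sum_{k=0}^{n} \binom{n}{k}^{p} \int_0^1 w^{kp}(1-w)^{(n-k)p}\,dw, \notag
\end{equation}
and the remaining integral is the Beta integral $B(kp+1, (n-k)p+1) = \frac{\Gamma(kp+1)\Gamma((n-k)p+1)}{\Gamma(np+2)}$, which yields the claimed identity. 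For the $p=2$ specialization, I would plug in $p=2$, use $\Gamma(m+1) = m!$ to write the summand as $\binom{n}{k}^2 \frac{(2k)!(2n-2k)!}{(2n+1)!}$, and recognize the resulting sum via the Vandermonde-type identity $\sum_{k=0}^n \binom{2k}{k}\binom{2n-2k}{n-k} = 4^n$ (after rewriting $\binom{n}{k}^2 (2k)!(2n-2k)! = (k!)^2((n-k)!)^2 \cdot \binom{2k}{k}\binom{2n-2k}{n-k} \cdot \frac{1}{(k!)^2 ((n-k)!)^2}$... more directly, $\binom{n}{k}^2 \frac{(2k)!(2n-2k)!}{(2n)!}$ simplifies using $\binom{2n}{2k}^{-1}$-type manipulations) to obtain $\frac{(n+1)}{(2n+1)} \cdot \frac{4^n}{\binom{2n}{n}}$. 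The final inequality $\frac{n+1}{2n+1}\cdot\frac{4^n}{\binom{2n}{n}} \leq \frac{16\sqrt{\pi n}}{21}$ follows from the standard bound $\binom{2n}{n} \geq \frac{4^n}{\sqrt{\pi n}}\cdot c$ for an appropriate constant, tracking the constant so that equality-ish at $n=1$ gives the $\frac{16}{21}$ factor (at $n=1$: $\frac{2}{3}\cdot\frac{4}{2} = \frac43$ and $\frac{16\sqrt\pi}{21} \approx 1.35$).

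The main obstacle I anticipate is bookkeeping rather than conceptual: correctly tracking the powers of $(n+1)$ and $\binom{n}{k}$ when the density $(n+1)\binom{n}{k}w^k(1-w)^{n-k}$ is raised to the $p$-th power and then one factor of $B(k+1,n-k+1) = \binom{n}{k}^{-1}(n+1)^{-1}$ is reabsorbed from the base measure $P_{X^n}$. A secondary technical point is the closed-form evaluation of $\sum_k \binom{n}{k}^2 (2k)!(2n-2k)!$ at $p=2$ and pinning down the constant $\frac{16}{21}$ in the final estimate; I would verify the Vandermonde identity $\sum_{k}\binom{2k}{k}\binom{2(n-k)}{n-k} = 4^n$ (coefficient of $x^n$ in $(1-4x)^{-1/2}\cdot$ itself... actually in $\big(\sum_m \binom{2m}{m}x^m\big)^2 = (1-4x)^{-1}$) and then compare $\frac{n+1}{2n+1}\cdot\frac{4^n}{\binom{2n}{n}}$ against $\sqrt{n}$ by a monotonicity argument in $n$, checking the base case to extract the sharp constant.
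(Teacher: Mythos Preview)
Your proposal is correct and follows essentially the same route as the paper: compute the Radon--Nikodym derivative $(n+1)\binom{n}{k}w^k(1-w)^{n-k}$, raise it to the $p$-th power, integrate against $P_WP_{X^n}$ and collapse via the Beta--Gamma identity; for $p=2$ invoke $\sum_k \binom{2k}{k}\binom{2(n-k)}{n-k}=4^n$ and finish with a Stirling-type bound on $\binom{2n}{n}$. The only place where the paper is more explicit than your sketch is the final constant: it factors the bound as $\frac{n+1}{2n+1}\leq \frac{2}{3}$ and $\frac{4^n}{\binom{2n}{n}}\leq \frac{8}{7}\sqrt{\pi n}$ (both sharp at $n=1$), whose product gives $\frac{16}{21}\sqrt{\pi n}$, which is exactly the ``check the base case and use monotonicity'' step you anticipated.
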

\begin{proof}
\if\arxiv1
    See Appendix \ref{appendix:proof_lemma_hellinger_bernoulli}.
\else 
    The proof is omitted due to space constraints but can be found in \cite[Appendix B]{arxiv_version_of_this_paper}.
\fi
\end{proof}
\begin{corollary}\label{corollary:bernoulli_lower_bound_closed_form}
Consider the setting described in Example \ref{bernoulliBias}. The Bayesian risk is lower-bounded by
\begin{equation}\label{equation:bernoulli_lower_bound_closed_form}
    R \geq \frac{7}{72\sqrt{\pi n}}.
\end{equation}
\end{corollary}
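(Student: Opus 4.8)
The plan is to specialize Corollary~\ref{lowerBoundHellinger} to $p=2$ (i.e., to the $\chi^2$--divergence) and then carry out the elementary optimization over $\rho$. First I would record the small-ball probability for this example: since $W\sim U[0,1]$ and $\ell(w,\hat w)=|w-\hat w|$, for any fixed $\hat w$ we have $L_W(\hat w,\rho)=\Prob(|W-\hat w|\le \rho)=\text{length}\big([\hat w-\rho,\hat w+\rho]\cap[0,1]\big)\le 2\rho$, hence $L_W(\rho)\le 2\rho$. Substituting $p=2$ and this estimate into Corollary~\ref{lowerBoundHellinger} gives, for every $\rho>0$,
\[
R\ \ge\ \rho\Big(1-\sqrt{2\rho}\,\sqrt{\chi^2(W,X^n)+1}\Big),
\]
which is precisely the $p=2$ instance of the bound \eqref{equation:lower_bound_hellinger_bernoulli}.

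Next I would optimize the right-hand side over $\rho$. Writing $c:=\sqrt{\chi^2(W,X^n)+1}$ and $g(\rho):=\rho-\sqrt{2}\,c\,\rho^{3/2}$, the stationarity condition $g'(\rho)=1-\tfrac32\sqrt{2}\,c\,\sqrt{\rho}=0$ yields $\rho^\star=\frac{2}{9c^2}$, and plugging this back in gives $g(\rho^\star)=\frac{2}{27c^2}=\frac{2}{27\,(\chi^2(W,X^n)+1)}$; since $g$ is concave on the relevant range this is the maximum. Hence $R\ge \frac{2}{27\,(\chi^2(W,X^n)+1)}$.

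Finally I would invoke Lemma~\ref{lemma:hellinger_bernoulli}, specifically the estimate \eqref{H2Bernoulli}, which gives $\chi^2(W,X^n)+1\le \frac{16\sqrt{\pi n}}{21}$. Combining the two displays,
\[
R\ \ge\ \frac{2}{27}\cdot\frac{21}{16\sqrt{\pi n}}\ =\ \frac{7}{72\sqrt{\pi n}},
\]
as claimed. There is no serious obstacle here beyond bookkeeping; the only points needing a little care are (i) using the clean bound $L_W(\rho)\le 2\rho$ rather than the exact value $\min\{2\rho,1\}$ (the former suffices and keeps the optimization closed-form), (ii) checking that the monotonicity hypotheses allowing Corollary~\ref{lowerBoundHellinger} to pass from $\hat W$ to $X^n$ via data processing do hold for $p=2$ (they do, as $G_{f_2}$ is increasing in each of its arguments), and (iii) the constant arithmetic $\tfrac{2\cdot 21}{27\cdot 16}=\tfrac{7}{72}$ together with the Stirling-type estimate underpinning \eqref{H2Bernoulli}.
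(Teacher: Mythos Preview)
Your proof is correct and follows essentially the same approach as the paper: specialize Corollary~\ref{lowerBoundHellinger} to $p=2$ with $L_W(\rho)\le 2\rho$, optimize over $\rho$ to obtain $R\ge \frac{2}{27(\chi^2(W,X^n)+1)}$, and then apply the bound \eqref{H2Bernoulli} from Lemma~\ref{lemma:hellinger_bernoulli}. The only difference is that you carry out the $\rho$-optimization explicitly, whereas the paper defers it to Appendix~\ref{appendix:optimization_rho}.
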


\begin{proof}

Let $p=2$ in Corollary \ref{lowerBoundHellinger} along with $L_W(
\rho)\leq 2\rho$, one has that 
\begin{equation}
    R \geq \sup_{\rho>0}\rho \left(1-\sqrt{2\rho(\chi^2(W, X^n)+1)}\right).
\end{equation}
Solving the maximization over $\rho$ (cf.\if\arxiv1 Appendix \ref{appendix:optimization_rho}) \else \cite[Appendix A]{arxiv_version_of_this_paper}\fi) and using \eqref{H2Bernoulli} we conclude that
\begin{align}
    R &\geq \frac{2}{27}\cdot \frac{1}{\chi^2(W, X^n)+1} \geq \frac{7}{72\sqrt{\pi n}}.
\end{align}
\end{proof}
Notice that \eqref{equation:bernoulli_lower_bound_closed_form} matches the upper-bound up to a constant, and tightens the result in \cite[Corollary 2]{bayesRiskRaginsky} while not requiring that $n \to \infty$.
\begin{remark}
As mentioned in previous proof, Stirling's approximation yields $(\chi^2(W, X^n)+1) \sim \frac{\sqrt{\pi n}}{2}$ when $n$ is large. This implies that for $n$ large one can show that
$R \gtrsim \frac{4}{27\sqrt{\pi n}}$, thus leading to a slight improvement over \eqref{equation:bernoulli_lower_bound_closed_form}.
\end{remark}
Similarly, one can do the same steps used to retrieve Corollary \ref{corollary:bernoulli_lower_bound_closed_form}, but this time using the $E_{\beta,\gamma}$--Divergence instead of the $\mathcal{H}_p$--Divergence. In particular, for the case $\beta=0.75$ and $\gamma=2.2$, Eq. \eqref{equation:hockey_stick_general_bound} in this example can be expressed as
\begin{align}
    R &\geq \sup_{\rho>0}\rho\left(1-\frac{4}{3}\left(E_{0.75, 2.2}(W, X^n) + 4.4\rho\right)\right)\\
    &= 
    \frac{5(0.75-E_{0.75, 2.2}(W, X^n))^2}{66}.\label{equation:bernoulli_hockey_stick}
\end{align}

A direct comparison between the bounds we provide and those already present in the literature can be seen in Figure \ref{fig:bernoulli_specific_params}. The lower-bounds are computed as a function of the number of samples $n$, which we consider to be in the range $\{1, \dots ,50\}$. The figure shows that all the divergences we considered in this work provide a larger (and thus, better) lower-bound on the Bayesian risk when compared with results that stem from using Shannon's Mutual Information (cf. \cite[Corollary 2]{bayesRiskRaginsky}). In particular, the lower-bound involving the $E_{\beta, \gamma}$--Mutual Information represents the largest among the ones we consider. Given the lack of a closed-form expression for $E_{\beta,\gamma}$ in this example the quantities \eqref{equation:bernoulli_hockey_stick} along with (\!\!\cite[Corollary 2, Eq. (19)]{bayesRiskRaginsky} and \cite[Eq. (16)]{bayesRiskIalpha}) and \eqref{equation:bernoulli_lower_bound_closed_form} are computed numerically.
\begin{figure}
    \centering
    \includegraphics[width=0.48\textwidth]{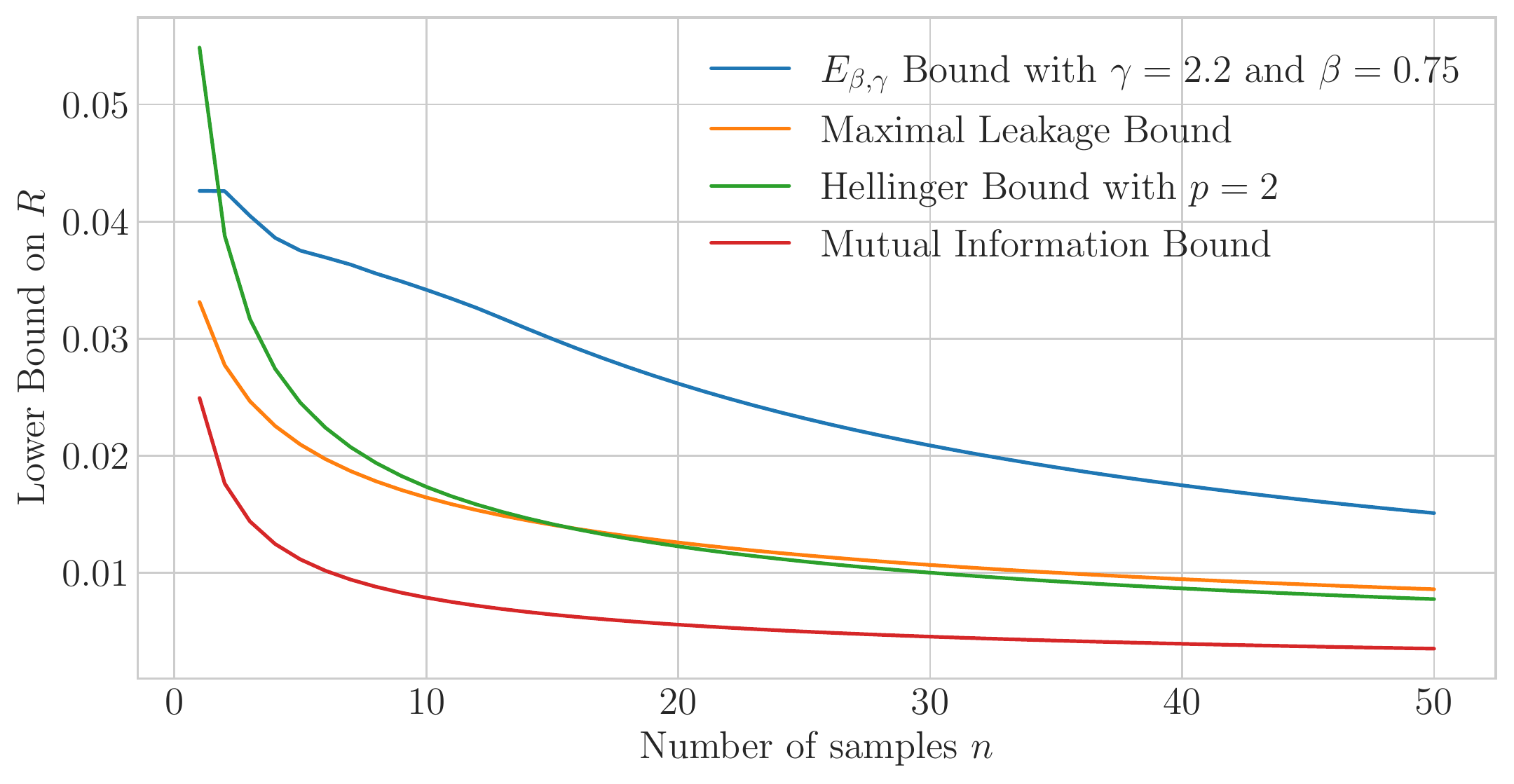}
    \caption{Setting: Example \ref{bernoulliBias}. The picture shows the behaviour of \eqref{equation:bernoulli_lower_bound_closed_form}, \eqref{equation:bernoulli_hockey_stick}, \cite[Eq. (16)]{bayesRiskIalpha}, and\cite[Corollary 2, Eq. (19)]{bayesRiskRaginsky} as a function of $n$. The values of $E_{0.75, 2.2}(W, X^n)$ for each $n$ are computed numerically. Here, unlike in Figure \ref{fig:bernoulli_best_params} where parameters are optimized, the values are fixed to $\gamma=2.2, \beta=0.75$ and $p=2$.}
    \label{fig:bernoulli_specific_params}
\end{figure}

\subsection{Gaussian prior with Gaussian noise in $d$ dimensions}
\begin{example}\label{gaussianPriorGaussianNoise}
Assume that $W\sim N(0,\sigma^2_W)$ and that for $i\in[n]$, $X_i = W + Z_i$ where $Z_i \sim N(0,\sigma^2)$. Assume also that the loss is s.t. $\ell(w,\hat{w}) = |w-\hat{w}|$.
\end{example}
Using the estimator $\hat{W} = \mathbb{E}[W|\bar{X}]$ with $\bar{X} = \frac{1}{n}\sum_{i=1}^n X_i \sim \mathcal{N}(0, \frac{\sigma^2}{n})$, one has that $R \leq \sqrt{\sigma_W^2/\left(1+n\frac{\sigma_W^2}{\sigma^2}\right)}$. Moreover, the small-ball probability can be upper-bounded as follows
\begin{align}
    L_W(\rho) 
   &\leq 
   \left(\sup_{w\in\mathbb{R}} P_W(w)\right)\left(\int_{-\rho}^{\rho} 1du\right)=\frac{2\rho}{\sqrt{2\pi \sigma_W^2}}.
\end{align}
Once again the largest lower bound on the risk, in the family of bounds provided by Corollary \ref{lowerBoundHellinger}, can be expressed as follows
\begin{equation}
    R\!\geq\!\sup_{\rho > 0} \sup_{p > 1}\rho\!\left(\!\!1- \left(\!\!\frac{2\rho}{\sqrt{2\pi\sigma_W^2}}\!\!\right)^{\frac{p-1}{p}}\!\!\!\!\left(\!(p-1)\mathcal{H}_p(W,X^n) \!+\!1\!\right)\!^\frac{1}{p}\!\!\right). \label{equation:gaussian_lower_bound}
\end{equation}
To compute the Hellinger information, we make use of the following lemma:
\begin{lemma}\label{lemma:hellinger_gaussian}
Let $W \sim \mathcal{N}(0, \sigma_W^2I_d)$ and $Z\sim \mathcal{N}(0, \sigma^2I_d)$ be two Gaussian random variables, where $I_d$ denotes the $d\times d$ identity matrix. Moreover, let $X = W+Z$ and $p>1$. Then 
\begin{equation}
    (p-1)(\mathcal{H}_p(W,X)+1) = \left(\frac{\left(1+\frac{\sigma_W^2}{\sigma^2}\right)^p}{1 + (2-p)p\frac{\sigma_W^2}{\sigma^2}}\right)^\frac{d}{2}.
\end{equation}

In particular, with $p=3/2$ and $d=1$, one recovers:
\begin{equation}\label{equation:hellinger_gaussian_32}
    \frac{1}{2}(\mathcal{H}_{3/2}(W, X)+1) =  \sqrt{\frac{\left(1+\frac{\sigma_W^2}{\sigma^2}\right)^\frac32}{1 + \frac{3\sigma_W^2}{4\sigma^2}}}.
\end{equation}
\end{lemma}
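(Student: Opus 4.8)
The plan is to compute $\mathcal{H}_p(W,X)$ directly from its definition as an $f$--divergence between the joint law $P_{WX}$ and the product of marginals $P_W P_X$, exploiting the fact that for the Gaussian family everything reduces to Gaussian integrals. First I would recall that $(p-1)(\mathcal{H}_p(W,X)+1) = \int (dP_{WX}/dP_WP_X)^p \, dP_WP_X = \iint \left(\frac{p_{W,X}(w,x)}{p_W(w)p_X(x)}\right)^{p} p_W(w)p_X(x)\,dw\,dx$, where $p_{W,X}$ is the joint density, $p_W = \mathcal{N}(0,\sigma_W^2 I_d)$, and $p_X = \mathcal{N}(0,(\sigma_W^2+\sigma^2)I_d)$ since $X = W+Z$ with independent $Z\sim\mathcal{N}(0,\sigma^2 I_d)$. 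Because all densities are Gaussian and the $d$ coordinates are independent, the integral factorizes into $d$ identical one-dimensional integrals, so it suffices to handle $d=1$ and raise the result to the power $d/2$ — actually the power will come out as $d$ on a one-dimensional quantity that is itself a square root, which is why the final exponent is $d/2$.

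The key computation is then the scalar Gaussian integral. I would write the joint density $p_{W,X}(w,x)$ as $p_W(w)\,p_{X|W}(x|w)$ with $p_{X|W=w} = \mathcal{N}(w,\sigma^2)$, so the likelihood ratio is $\frac{p_{X|W}(x|w)}{p_X(x)}$, a ratio of two Gaussians in $x$. Plugging into the integral, one gets $\iint p_W(w)\,p_X(x)^{1-p}\,p_{X|W}(x|w)^{p}\,dw\,dx$. The inner structure is: collect all exponential terms, which are quadratic forms in $(w,x)$; the exponent is $-\frac{w^2}{2\sigma_W^2} - (1-p)\frac{x^2}{2(\sigma_W^2+\sigma^2)} - p\frac{(x-w)^2}{2\sigma^2}$, together with the normalization constants $(2\pi\sigma_W^2)^{-1/2}(2\pi(\sigma_W^2+\sigma^2))^{-(1-p)/2}(2\pi\sigma^2)^{-p/2}$. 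One then performs the Gaussian integral over $x$ first (completing the square; this requires the coefficient of $x^2$, namely $\frac{p}{\sigma^2} + \frac{1-p}{\sigma_W^2+\sigma^2}$, to be positive, which is the condition $1+(2-p)p\frac{\sigma_W^2}{\sigma^2}>0$ in disguise), and then over $w$. The result collapses, after simplification in terms of the ratio $r := \sigma_W^2/\sigma^2$, to $\left(\frac{(1+r)^p}{1+(2-p)p\,r}\right)^{1/2}$ in dimension one, giving the claimed formula with exponent $d/2$ in general dimension. The $p=3/2$, $d=1$ specialization is then a direct substitution: $(2-p)p = \frac12\cdot\frac32 = \frac34$, yielding $\sqrt{(1+r)^{3/2}/(1+\frac34 r)}$, which after writing $r = \sigma_W^2/\sigma^2$ matches \eqref{equation:hellinger_gaussian_32} (noting $\frac12(\mathcal{H}_{3/2}+1) = (p-1)(\mathcal{H}_p+1)$ at $p=3/2$).

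I do not expect any deep obstacle here; this is a bookkeeping exercise in Gaussian integration. The main thing to be careful about is tracking the normalization constants and the signs/powers correctly, and in particular verifying that the positivity condition on the $x$-quadratic coefficient is exactly $1 + (2-p)p\frac{\sigma_W^2}{\sigma^2} > 0$, so that the stated formula is meaningful precisely when the integral converges (and $\mathcal{H}_p$ is finite). A clean way to organize the algebra is to use the known closed form for the integral $\int \mathcal{N}(x;\mu_1,\tau_1^2)^{a}\mathcal{N}(x;\mu_2,\tau_2^2)^{b}\,dx$ of products of powers of Gaussians, or equivalently to recognize the whole expression as (a power of) a ratio of determinants of the relevant covariance matrices — the latter viewpoint also makes the tensorization over the $d$ coordinates transparent. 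A secondary minor point is to double-check the identity $(p-1)(\mathcal{H}_p+1) = \mathbb{E}_{P_WP_X}[(dP_{WX}/dP_WP_X)^p]$ against the definition $f_p(t) = (t^p-1)/(p-1)$, which gives $\mathcal{H}_p = \frac{1}{p-1}(\mathbb{E}[(dP_{WX}/dP_WP_X)^p]-1)$, hence $(p-1)(\mathcal{H}_p+1) = \mathbb{E}[(dP_{WX}/dP_WP_X)^p]$ as used above.
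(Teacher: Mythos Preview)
Your proposal is correct and follows essentially the same route as the paper: both compute $(p-1)(\mathcal{H}_p(W,X)+1)=\iint p_W(w)\,p_X(x)^{1-p}\,p_{X|W}(x|w)^{p}\,dw\,dx$ by completing the square in the Gaussian exponent. The only cosmetic differences are that the paper integrates over $w$ first and then $x$ (defining an auxiliary $G_p(x)$) and works directly in $\mathbb{R}^d$ throughout, whereas you integrate over $x$ first, reduce to $d=1$ by tensorization, and explicitly flag the positivity condition $1+(2-p)p\,\sigma_W^2/\sigma^2>0$ needed for convergence; none of this changes the substance of the argument.
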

\begin{proof}
\if\arxiv1
    See Appendix \ref{appendix:proof_lemma_hellinger}.
\else 
    The proof is omitted due to space constraints but can be found in \cite[Appendix C]{arxiv_version_of_this_paper}.
\fi
\end{proof}
Setting $p=3/2$ in \eqref{equation:gaussian_lower_bound} leads to the following result:
\begin{corollary}\label{corollary:gaussian_lower_bound_closed_form}
Consider the setting described in Example \ref{gaussianPriorGaussianNoise}. The Bayesian risk is lower-bounded by
\begin{equation}\label{equation:gaussian_lower_bound_closed_form}
    R \geq \frac{81\sqrt{2\pi}}{2048}\sqrt{\frac{\sigma_W^2}{1+n\frac{\sigma_W^2}{\sigma^2}}}.
\end{equation}
\end{corollary}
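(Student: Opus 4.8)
The plan is to instantiate Corollary~\ref{lowerBoundHellinger} at the particular value $p=3/2$, feed it the small-ball estimate $L_W(\rho)\le 2\rho/\sqrt{2\pi\sigma_W^2}$ recorded just above the statement (which is exactly the bound~\eqref{equation:gaussian_lower_bound}), evaluate the Hellinger information through Lemma~\ref{lemma:hellinger_gaussian}, and then perform the maximization over $\rho$ in closed form, finishing with one crude but convenient algebraic estimate to reach the stated constant.

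First I would reduce the sample $X^n$ to the scalar average $\bar X=\tfrac1n\sum_{i=1}^n X_i$. In the Gaussian location model $\bar X$ is a sufficient statistic for $W$, so the Radon--Nikodym derivative $dP_{WX^n}/d(P_WP_{X^n})$ depends on $X^n$ only through $\bar X$; hence $\mathcal{H}_{3/2}(W,X^n)=\mathcal{H}_{3/2}(W,\bar X)$ (equivalently, the data-processing inequality holds with equality both ways). Writing $\bar X=W+\bar Z$ with $\bar Z\sim\mathcal{N}(0,\sigma^2/n)$, Lemma~\ref{lemma:hellinger_gaussian} applies with $d=1$, $p=3/2$, and $\sigma^2$ replaced by $\sigma^2/n$ (this is Eq.~\eqref{equation:hellinger_gaussian_32} after that substitution), giving a closed form for the quantity $(p-1)\mathcal{H}_{3/2}(W,X^n)+1$ appearing in~\eqref{equation:gaussian_lower_bound}, namely $\bigl((1+n\sigma_W^2/\sigma^2)^{3/2}/(1+\tfrac34 n\sigma_W^2/\sigma^2)\bigr)^{1/2}$.

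Substituting this into~\eqref{equation:gaussian_lower_bound} at $p=3/2$ leaves the one-parameter problem $\sup_{\rho>0}\rho\bigl(1-A\rho^{1/3}\bigr)$, where $A=\bigl(2/\sqrt{2\pi\sigma_W^2}\bigr)^{1/3}\bigl((p-1)\mathcal{H}_{3/2}(W,X^n)+1\bigr)^{2/3}$ collects all the constants. The elementary $\rho$-optimization (the same one carried out in Appendix~\ref{appendix:optimization_rho}) gives supremum $27/(256A^3)$, attained at $\rho=(3/(4A))^3$; plugging in $A^3=\tfrac{2}{\sqrt{2\pi\sigma_W^2}}\bigl((1+n\sigma_W^2/\sigma^2)^{3/2}/(1+\tfrac34 n\sigma_W^2/\sigma^2)\bigr)$ yields
\[
R\;\ge\;\frac{27\sqrt{2\pi}\,\sigma_W}{512}\cdot\frac{1+\tfrac34 n\sigma_W^2/\sigma^2}{\bigl(1+n\sigma_W^2/\sigma^2\bigr)^{3/2}}.
\]
The final step is the bound $\frac{1+\frac34 u}{1+u}\ge\frac34$, valid for all $u\ge0$ (it is equivalent to $1\ge\frac34$), applied with $u=n\sigma_W^2/\sigma^2$; it collapses the fraction to $\frac34\cdot(1+n\sigma_W^2/\sigma^2)^{-1/2}$ and produces $\frac{81\sqrt{2\pi}}{2048}\sqrt{\sigma_W^2/(1+n\sigma_W^2/\sigma^2)}$, which is~\eqref{equation:gaussian_lower_bound_closed_form}.

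I do not expect a genuine obstacle: the result is a direct instantiation of the machinery already set up. The two spots that need a little care are the identity $\mathcal{H}_{3/2}(W,X^n)=\mathcal{H}_{3/2}(W,\bar X)$, which is what permits reuse of the single-observation Lemma~\ref{lemma:hellinger_gaussian} after rescaling the noise variance, and the exponent bookkeeping in the $\rho$-maximization together with the closing monotonicity estimate $\frac{1+\frac34 u}{1+u}\ge\frac34$ — this last estimate is lossy and is precisely what turns the sharper $\rho$-optimal expression into the tidy constant $\tfrac{81\sqrt{2\pi}}{2048}$.
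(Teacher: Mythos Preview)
Your proposal is correct and mirrors the paper's own argument step for step: reduce to the sufficient statistic $\bar X$, apply Lemma~\ref{lemma:hellinger_gaussian} at $p=3/2$ with noise variance $\sigma^2/n$, optimize over $\rho$ via the calculus in Appendix~\ref{appendix:optimization_rho}, and finish with the same monotonicity estimate $\tfrac{1+\frac34 u}{1+u}\ge\tfrac34$ (this last step is exactly the ``algebraic manipulation'' the paper leaves implicit). There is no substantive difference between your route and the paper's.
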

\begin{proof} 
Given that $\bar{X}$ is a sufficient statistic we have that $\mathcal{H}_p(W,X^n) = \mathcal{H}_p(W,\bar{X})$. 
Plugging this choice of $\bar{X}$ in \eqref{equation:hellinger_gaussian_32}, substituting in \eqref{equation:gaussian_lower_bound}, and then optimizing over $\rho$ (cf. \if\arxiv1 Eq. \eqref{rhoStar} \else \cite[Eq. (45)]{arxiv_version_of_this_paper}\fi), yields the statement after some algebraic manipulations.
\end{proof}
Note that \eqref{equation:gaussian_lower_bound_closed_form} matches the upper-bound up to a constant factor, and provides a strengthening of the bounds obtained in \cite[Corollary 1]{bayesRiskRaginsky}.
One can, as in Example \ref{bernoulliBias}, repeat the analysis with the $f_{\beta, \gamma}$--Divergence instead of the $f_p$--Divergence. In particular for the case $\beta=0.75$ and $\gamma=2.2$, Equation \eqref{equation:hockey_stick_general_bound} in this example can be expressed as
\begin{align}
    R &\geq \sup_{\rho>0}\rho\left(1-\frac{4}{3}\left(E_{0.75, 2.2}(W, X^n) + \frac{4.4\rho}{\sqrt{2\pi\sigma_W^2}}\right)\right)\\
    &=\frac{5\sqrt{2\pi\sigma_W^2}(0.75-E_{0.75, 2.2}(W, X^n))^2}{66},\label{equation:gaussian_hockey_stick}
\end{align}
where the optimization over $\rho$ stems from\if\arxiv1 Appendix A\else\cite[Appendix A]{arxiv_version_of_this_paper}\fi.
\begin{figure}
    \centering
    \includegraphics[width=0.48\textwidth]{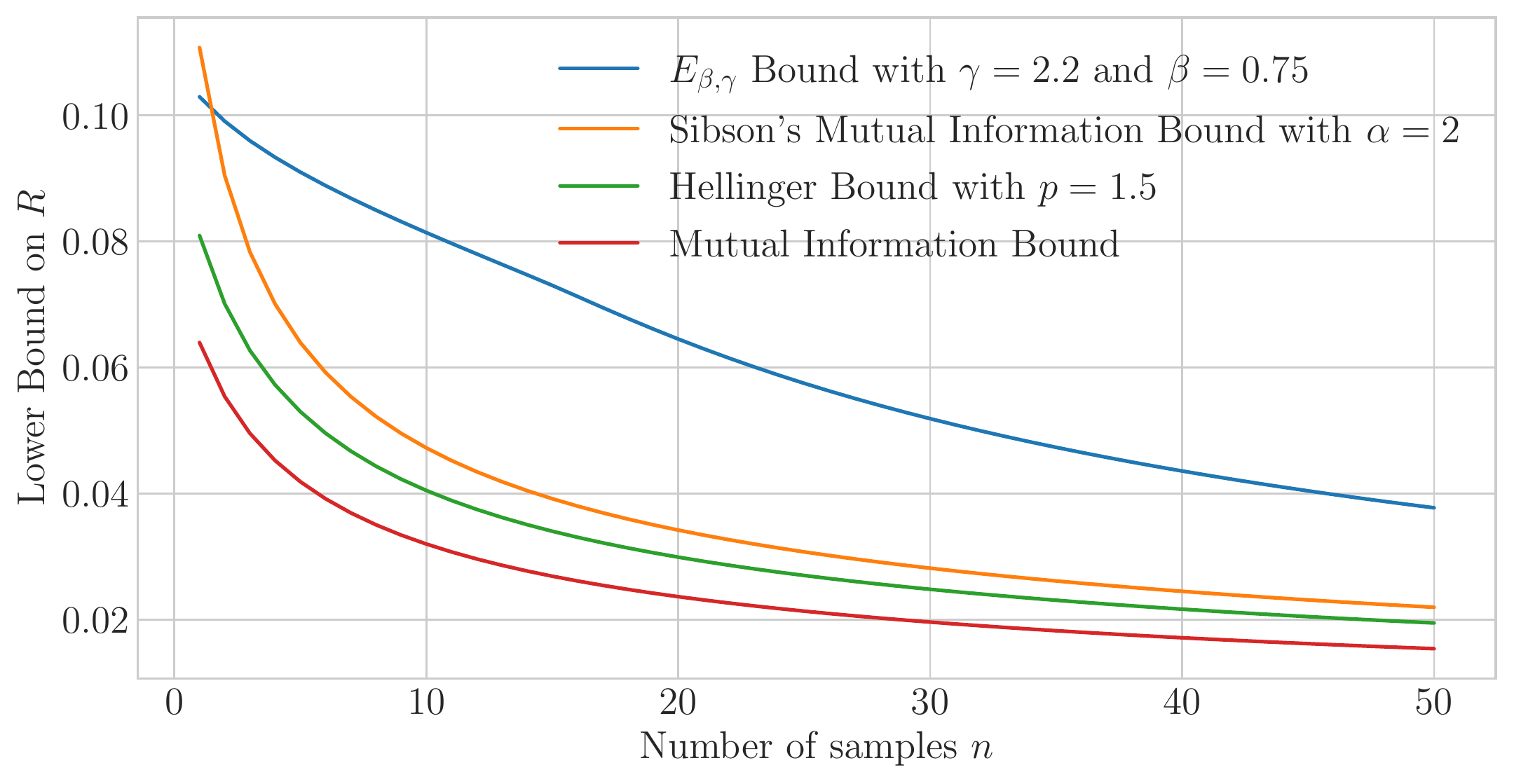}
    \caption{Setting: Example \ref{gaussianPriorGaussianNoise} with $\sigma_W^2=1$ and $\sigma^2=2$. The picture shows the behaviour of \eqref{equation:gaussian_lower_bound_closed_form}, \eqref{equation:gaussian_hockey_stick}, \cite[Eq. (21)]{bayesRiskIalpha}, and \cite[Corollary 1, Eq. (16)]{bayesRiskRaginsky} as a function of $n$. The values of $E_{0.75, 2.2}(W, X^n)$ for each $n$ are computed numerically. Here, the values of the parameters are fixed to $\gamma=2.2, \beta=0.75, \alpha=2$ and $p=1.5$.
    }
    \label{fig:gaussian_best_params}
\end{figure}

Similarly to Example \ref{bernoulliBias}, we numerically evaluate \eqref{equation:gaussian_hockey_stick} and compare it with \cite[Corollary 1, Eq. (16)]{bayesRiskRaginsky}, \cite[Eq. (21)]{bayesRiskIalpha} (with $\alpha=2$), and \eqref{equation:gaussian_lower_bound_closed_form}. Figure \ref{fig:gaussian_best_params} shows the resulting lower-bounds as a function of the number of samples $n$. One can observe similar behaviors when comparing with the results from previous example: the bounds retrieved through the $\mathcal{H}_p$-- and $E_{\beta, \gamma}$--Divergences are able to both improve on the lower-bound relying on Shannon's Mutual Information. Once again, Equation \eqref{equation:lowerBoundHockeyStick} gives the largest lower-bound in this example, while Sibson's $\alpha$-Mutual Information is still able to provide a stronger result than \eqref{equation:lowerBoundHellinger}.
\if\arxiv1
\appendix

\subsection{Maximization over $\rho$}\label{appendix:optimization_rho}
In the two examples considered, one can notice that the lower-bounds resulting from Corollaries \ref{lowerBoundHellinger} and \ref{lowerBoundHockeyStick} have the following form
\begin{equation} \label{equation:rho_equation}
\sup_{\rho>0} \rho(1-c\rho^t-b),
\end{equation}
for some $c, t, b\geq 0$. Letting $ h(\rho):=\rho(1-c\rho^t-b)$, the optimal value is found by setting $h^\prime(\rho_\star) = 0$, which yields
\begin{equation}
    1-(t+1)c\rho_\star^t-b=0 \iff \rho_\star = \left(\frac{1-b}{(t+1)c}\right)^{\frac{1}{t}}. 
\end{equation}
Since $h^{\prime\prime}(\rho_\star) = -t(t+1)c\rho_\star^{t-1} \leq 0$, this ensures $\rho_\star$ is a maximum. Substituting $\rho^\star$ back in \eqref{equation:rho_equation}, we find
\begin{equation}
    \sup_{\rho>0} \rho(1-c\rho^t-b)=\frac{t}{c^{\frac1t}}\left(\frac{1-b}{t+1}\right)^{1+\frac1t}.\label{rhoStar}
\end{equation}

\subsection{Proof of Lemma \ref{lemma:hellinger_bernoulli}}\label{appendix:proof_lemma_hellinger_bernoulli}
In order to prove Lemma \ref{lemma:hellinger_bernoulli}, let us introduce a technical lemma which will be useful in subsequent computations.
\begin{lemma}[\!\!\protect{\cite[Eq. (5.39), p.187]{concrete_mathematics}}]\label{lemma:combinatorial_sum}
Let $n \geq 0$ be a positive integer. Then
\begin{equation}
    \sum_{k=0}^n \binom{2k}{k}\binom{2(n-k)}{n-k} = 4^n.
\end{equation}
\end{lemma}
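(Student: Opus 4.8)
The plan is to prove the identity by a generating-function (Cauchy-product) argument. The key ingredient is the classical generating function $\sum_{k\ge 0}\binom{2k}{k}x^k=(1-4x)^{-1/2}$, which holds as a formal power series over $\mathbb{Q}$ (equivalently, as an analytic identity for $|x|<1/4$). To establish it I would apply the generalized binomial theorem, $(1-4x)^{-1/2}=\sum_{k\ge0}\binom{-1/2}{k}(-4x)^k$, and check the coefficient identity $\binom{-1/2}{k}(-4)^k=\binom{2k}{k}$: writing $\binom{-1/2}{k}=\frac{(-1)^k(2k-1)!!}{2^k k!}$ together with $(2k-1)!!=\frac{(2k)!}{2^k k!}$ gives $\binom{-1/2}{k}(-4)^k=\frac{(2k)!}{(k!)^2}=\binom{2k}{k}$.

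With this in hand, squaring yields $\left(\sum_{k\ge0}\binom{2k}{k}x^k\right)^2=(1-4x)^{-1}=\sum_{n\ge0}4^n x^n$. Reading off the coefficient of $x^n$: on the left the Cauchy product gives $\sum_{k=0}^n\binom{2k}{k}\binom{2(n-k)}{n-k}$, and on the right it is $4^n$. Equating them is exactly the claimed identity, and it holds for every integer $n\ge0$ (the case $n=0$ reducing to $1=1$).

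An alternative, fully combinatorial route I could take instead: interpret $\binom{2m}{m}$ as the number of length-$2m$ walks on $\mathbb{Z}$ with $\pm1$ steps starting and ending at $0$, and invoke the standard fact (a reflection-principle count) that the number of length-$2m$ such walks that stay away from $0$ at all positive times is also $\binom{2m}{m}$. One then decomposes an arbitrary length-$2n$ walk — there are $4^n=2^{2n}$ of them — according to the last time $2k$ at which it visits $0$: the prefix of length $2k$ is an arbitrary $0\!\to\!0$ walk and the suffix of length $2(n-k)$ is a $0$-avoiding walk, and this decomposition is a bijection, giving $\sum_{k=0}^n\binom{2k}{k}\binom{2(n-k)}{n-k}=4^n$.

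There is no serious obstacle here — the statement is an elementary textbook identity. The only point that genuinely requires a short argument is the auxiliary fact feeding the main step: the generating-function evaluation $\sum_k\binom{2k}{k}x^k=(1-4x)^{-1/2}$ for the first route, or the ``$0$-avoiding walks of length $2m$ number $\binom{2m}{m}$'' lemma for the second. Once either of these is granted, the conclusion is a one-line coefficient comparison, which is why I would present the generating-function version as the primary proof.
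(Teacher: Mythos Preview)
Your proof is correct. The paper does not supply its own argument for this lemma: it merely states the identity and cites \cite[Eq.~(5.39), p.~187]{concrete_mathematics}, so there is no in-paper proof to compare against. Your generating-function route --- squaring $\sum_{k\ge0}\binom{2k}{k}x^k=(1-4x)^{-1/2}$ and reading off the coefficient of $x^n$ --- is exactly the standard derivation (and the one given in the cited reference), and your coefficient check $\binom{-1/2}{k}(-4)^k=\binom{2k}{k}$ is clean. The random-walk bijection you sketch as an alternative is also valid; it is more combinatorial but requires the auxiliary fact that $0$-avoiding walks of length $2m$ number $\binom{2m}{m}$, whereas the generating-function proof is entirely self-contained once the binomial series is granted.
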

We can now move on and prove Lemma \ref{lemma:hellinger_bernoulli} which we restate here for reference.
\begin{lemma*}
Consider the setting described in Example \ref{bernoulliBias} \textit{i.e.}, $W \sim U[0,1]$ and  $X_i|\{W\!=\!w\} \sim \text{Ber}(w)$ for each $i\in[n]$. Then for every $p>1$,
\begin{align}
    (p-1)(&\mathcal{H}_p(W,X^n)+1) \notag \\&= (n+1)^{p-1}\sum_{k=0}^n \binom{n}{k}^p\frac{\Gamma(kp+1)\Gamma((n-k)p+1)}{\Gamma(np+2)}.\notag
\end{align}
\end{lemma*}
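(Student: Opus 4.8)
The plan is to compute $\mathcal{H}_p(W, X^n) = \mathcal{H}_p(\Pm_{WX^n} \| \Pm_W \Pm_{X^n})$ directly from the definition of the $f_p$-divergence, $\mathcal{H}_p(\Pm\|\Q) = \frac{1}{p-1}\left(\int q (p/q)^p\,d\mu - 1\right)$, so that $(p-1)(\mathcal{H}_p(W,X^n)+1) = \int \frac{(d\Pm_{WX^n})^p}{(d(\Pm_W\Pm_{X^n}))^{p-1}}$, i.e.\ the familiar $\chi^p$-type integral. First I would set up the joint law: with respect to the dominating measure $dw \otimes (\text{counting on }\{0,1\}^n)$, the joint density at $(w, x^n)$ with $k = \sum_i x_i$ is $w^k(1-w)^{n-k}$, the marginal of $W$ is the constant $1$ on $[0,1]$, and the marginal of $X^n$ at a string with $k$ ones is $\Pm_{X^n}(x^n) = \int_0^1 w^k(1-w)^{n-k}\,dw = B(k+1, n-k+1) = \frac{1}{(n+1)\binom{n}{k}}$ by the Beta integral.

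Next I would assemble the integrand. The Radon–Nikodym derivative $\frac{d\Pm_{WX^n}}{d(\Pm_W\Pm_{X^n})}$ at $(w,x^n)$ equals $\frac{w^k(1-w)^{n-k}}{\Pm_{X^n}(x^n)}$, so
\begin{equation}
(p-1)(\mathcal{H}_p(W,X^n)+1) = \sum_{x^n \in \{0,1\}^n} \Pm_{X^n}(x^n)^{1-p} \int_0^1 w^{kp}(1-w)^{(n-k)p}\,dw.
\end{equation}
Grouping the $2^n$ strings by their number of ones $k$ (there are $\binom{n}{k}$ of them, each with the same weight) and applying the Beta integral again, $\int_0^1 w^{kp}(1-w)^{(n-k)p}\,dw = \frac{\Gamma(kp+1)\Gamma((n-k)p+1)}{\Gamma(np+2)}$, while $\Pm_{X^n}(x^n)^{1-p} = \left((n+1)\binom{n}{k}\right)^{p-1}$. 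Collecting terms gives
\begin{equation}
(p-1)(\mathcal{H}_p(W,X^n)+1) = \sum_{k=0}^n \binom{n}{k} \left((n+1)\binom{n}{k}\right)^{p-1} \frac{\Gamma(kp+1)\Gamma((n-k)p+1)}{\Gamma(np+2)},
\end{equation}
which is exactly $(n+1)^{p-1}\sum_{k=0}^n \binom{n}{k}^p \frac{\Gamma(kp+1)\Gamma((n-k)p+1)}{\Gamma(np+2)}$ after pulling out $(n+1)^{p-1}$ and combining the two binomial factors.

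For the special case $p=2$, the Gamma factors become $\frac{(2k)!\,(2(n-k))!}{(2n+1)!}$ and $\binom{n}{k}^2 = \frac{(n!)^2}{(k!)^2((n-k)!)^2}$, so the summand is $\frac{(n!)^2}{(2n+1)!}\binom{2k}{k}\binom{2(n-k)}{n-k}$; summing over $k$ and invoking Lemma~\ref{lemma:combinatorial_sum} (the identity $\sum_k \binom{2k}{k}\binom{2(n-k)}{n-k} = 4^n$) collapses the sum to $4^n$, giving $\chi^2(W,X^n)+1 = (n+1)\cdot\frac{(n!)^2}{(2n+1)!}\cdot 4^n = \frac{n+1}{2n+1}\cdot\frac{4^n}{\binom{2n}{n}}$. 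The final numeric bound $\frac{16\sqrt{\pi n}}{21}$ follows from a Stirling estimate $\frac{4^n}{\binom{2n}{n}} \le \sqrt{\pi n}$ together with $\frac{n+1}{2n+1} \le \frac{16}{21}$ (tightest at small $n$; it suffices to check $n=1$ where $\frac{n+1}{2n+1} = \frac23 \le \frac{16}{21}$ and monotonicity). I expect the only mild obstacle to be bookkeeping: making sure the dominating-measure/Radon–Nikodym setup is stated cleanly enough that the two Beta-integral evaluations and the regrouping by $k$ are manifestly correct, and pinning down the constant in the Stirling step so the $\frac{16\sqrt{\pi n}}{21}$ bound holds for all $n\ge 1$ rather than just asymptotically.
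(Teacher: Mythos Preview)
Your derivation of the general-$p$ formula is correct and is essentially the paper's argument: compute the marginal $P_{X^n}(x^n)=\frac{1}{(n+1)\binom{n}{k}}$ via the Beta integral, form the Radon--Nikodym ratio, group strings by Hamming weight $k$, and evaluate $\int_0^1 w^{kp}(1-w)^{(n-k)p}\,dw$ as $\Gamma(kp+1)\Gamma((n-k)p+1)/\Gamma(np+2)$. The $p=2$ collapse to $\frac{n+1}{2n+1}\cdot\frac{4^n}{\binom{2n}{n}}$ via the identity $\sum_{k}\binom{2k}{k}\binom{2(n-k)}{n-k}=4^n$ is also exactly what the paper does.

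The one genuine slip is in your Stirling step. The inequality $\frac{4^n}{\binom{2n}{n}}\le\sqrt{\pi n}$ is false for every $n\ge 1$: since $\binom{2n}{n}=\frac{4^n}{\sqrt{\pi n}}\bigl(1-\tfrac{1}{8n}+O(n^{-2})\bigr)$ one always has $\frac{4^n}{\binom{2n}{n}}>\sqrt{\pi n}$ (already at $n=1$, $2>\sqrt{\pi}\approx 1.772$). So the factorization ``$\frac{4^n}{\binom{2n}{n}}\le\sqrt{\pi n}$ and $\frac{n+1}{2n+1}\le\frac{16}{21}$'' cannot deliver the bound. The paper puts the slack on the other factor: it uses the sharp $\frac{n+1}{2n+1}\le\frac{2}{3}$ together with $\frac{4^n}{\binom{2n}{n}}\le\frac{8}{7}\sqrt{\pi n}$ (equivalently $\binom{2n}{n}\ge\frac{7}{8}\cdot\frac{4^n}{\sqrt{\pi n}}$, which does hold for all $n\ge 1$), and $\frac{2}{3}\cdot\frac{8}{7}=\frac{16}{21}$.
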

\begin{proof}
In this specific setting, one has that $P_{X^n|W=w}(x^n) = w^k(1-w)^{(n-k)}$ where $k=\sum_{i=1}^n x_i$, \textit{i.e.}, the hamming weight of $x^n$. As per assumption $P_W(w)=\mathds{1}\{0\leq w\leq 1\}$ and consequently one has that $P_{W|X^n=x^n}(w)= (n+1)\binom{n}{k}(1-w)^{n-k}w^k$. Thus we can compute
\begin{align}
    (p-1)\mathcal{H}_p(W,X^n)+1 &= \\ \sum_{x^n \in \{0, 1\}^n} P_{X^n}(x^n)\int_0^1 P_W(w)\left(\frac{P_{W|X^n=x^n}(w)}{P_W(w)}\right)^pdw &= \label{equation:hellinger_first_equality}\\ \sum_{k=0}^n \frac{1}{n+1}\int_0^1 \left((n+1)\binom{n}{k}w^k(1-w)^{(n-k)}\right)^p dw &= \\ (n+1)^{p-1}\sum_{k=0}^n \binom{n}{k}^p\frac{\Gamma(kp+1)\Gamma((n-k)p+1)}{\Gamma(np+2)},\label{equation:hellinger_computation}
\end{align}
where \eqref{equation:hellinger_first_equality} follows from the definition of Hellinger divergence and \eqref{equation:hellinger_computation} uses the identity relating the Beta function with the Gamma function:
\begin{equation}
    \mathrm{Beta}(x, y) = \int_0^1 t^{x-1}(1-t)^{y-1}dt = \frac{\Gamma(x)\Gamma(y)}{\Gamma(x+y)}.
\end{equation}
If $p=2$ one has that:
\begin{align}
    \chi^2(W,X^n)+1 &= (n+1)\sum_{k=0}^n \binom{n}{k}^2\frac{(2k)!(2(n-k))!}{(2n+1)!}\\&=\frac{n+1}{(2n+1)} \sum_{k=0}^n \frac{(n!)^2(2k)!(2(n-k))!}{(k!)^2((n-k)!)^2(2n)!}\\&= \frac{n+1}{(2n+1)\binom{2n}{n}} \sum_{k=0}^n \binom{2k}{k}\binom{2(n-k)}{n-k} \\&= \frac{n+1}{2n+1}\cdot \frac{4^n}{\binom{2n}{n}}\label{equation:chi_square_before_ub}\\&\leq \frac{2}{3}\cdot \frac{8\sqrt{\pi n}}{7}\label{equation:chi_square_after_ub}\\&= \frac{16\sqrt{\pi n}}{21},\label{equation:bernoulli_chi_square}
\end{align}
where \eqref{equation:chi_square_before_ub} follows from Lemma \ref{lemma:combinatorial_sum}. To obtain \eqref{equation:chi_square_after_ub}, we use $\frac{n+1}{2n+1} \leq \frac{2}{3}$ for $n\geq1$ and Stirling's approximation to get $\binom{2n}{n} \sim \frac{4^n}{\sqrt{\pi n}}$ and retrieve $\binom{2n}{n} \geq \frac{8}{7} \cdot \frac{4^n}{\sqrt{\pi n}}$ for $n\geq 1$.
\end{proof}
\subsection{Proof of Lemma \ref{lemma:hellinger_gaussian}} \label{appendix:proof_lemma_hellinger}
Let us re-state the result for ease of reference.
\begin{lemma*}
Let $W \sim \mathcal{N}(0, \sigma_W^2I_d)$ and $Z\sim \mathcal{N}(0, \sigma^2I_d)$ be two Gaussian random variables, where $I_d$ denotes the $d\times d$ identity matrix. Moreover, let $X = W+Z$ and $p>1$. Then 
\begin{equation*}
    (p-1)(\mathcal{H}_p(W,X)+1) = \left(\frac{\left(1+\frac{\sigma_W^2}{\sigma^2}\right)^p}{1 + (2-p)p\frac{\sigma_W^2}{\sigma^2}}\right)^\frac{d}{2}.
\end{equation*}

In particular, with $p=3/2$ and $d=1$, one recovers:
\begin{equation*}
    \frac{1}{2}(\mathcal{H}_{3/2}(W, X)+1) =  \sqrt{\frac{\left(1+\frac{\sigma_W^2}{\sigma^2}\right)^\frac32}{1 + \frac{3\sigma_W^2}{4\sigma^2}}}.
\end{equation*}
\end{lemma*}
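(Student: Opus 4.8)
The plan is to compute the Hellinger information $\mathcal{H}_p(W,X)+1$ directly from its definition by exploiting the fact that both $W$ and the posterior $P_{W|X=x}$ are Gaussian, so that the entire computation reduces to a Gaussian integral that can be evaluated in closed form. First I would reduce to the scalar case $d=1$: since $W\sim\mathcal{N}(0,\sigma_W^2 I_d)$, $Z\sim\mathcal{N}(0,\sigma^2 I_d)$ have independent coordinates, the joint law $P_{WX}$ factorizes over the $d$ coordinates, as does the product of marginals $P_W P_X$; because $D_{f_p}$ applied to a product measure multiplies (the integrand is a product of per-coordinate ratios raised to the $p$-th power), one gets $(p-1)(\mathcal{H}_p(W,X)+1) = \big((p-1)(\mathcal{H}_p^{(1)}(W,X)+1)\big)^d$, so it suffices to prove the $d=1$ formula and raise to the power $d$.

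For $d=1$, I would write, as in the Bernoulli case,
\begin{equation*}
(p-1)\mathcal{H}_p(W,X)+1 = \int_{\mathbb{R}} P_X(x)\int_{\mathbb{R}} P_W(w)\left(\frac{P_{W|X=x}(w)}{P_W(w)}\right)^p\,dw\,dx.
\end{equation*}
Here $X\sim\mathcal{N}(0,\sigma_W^2+\sigma^2)$, and the posterior $P_{W|X=x}$ is Gaussian with mean $m_x = \frac{\sigma_W^2}{\sigma_W^2+\sigma^2}x$ and variance $v = \frac{\sigma_W^2\sigma^2}{\sigma_W^2+\sigma^2}$. The inner integrand $P_W(w)^{1-p}P_{W|X=x}(w)^p$ is, up to a normalizing constant depending on $x$, the density of a Gaussian in $w$ with precision $\frac{p}{v}-\frac{p-1}{\sigma_W^2}$; I would collect the Gaussian normalization constants and the leftover exponential factor in $x$, check that the effective precision $\frac{p}{v}-\frac{p-1}{\sigma_W^2}$ is positive (this is exactly the condition $1+(2-p)p\frac{\sigma_W^2}{\sigma^2}>0$ appearing in the denominator — and it is what forces the formula to be stated for the relevant range of $p$), perform the $w$-integral, and then perform the resulting Gaussian $x$-integral against $P_X(x)$. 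After simplification the prefactors should collapse to $\left(\frac{(1+\sigma_W^2/\sigma^2)^p}{1+(2-p)p\sigma_W^2/\sigma^2}\right)^{1/2}$, and raising to the $d$-th power gives the claim. The special case $p=3/2$, $d=1$ is then immediate: $(2-p)p = \frac34$, so $1+(2-p)p\frac{\sigma_W^2}{\sigma^2} = 1+\frac{3\sigma_W^2}{4\sigma^2}$, and $(p-1)=\frac12$ on the left-hand side yields the stated identity.

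The main obstacle is purely bookkeeping: carefully tracking the three layers of Gaussian normalization constants (from $P_W^{1-p}$, from $P_{W|X=x}^{p}$, and the one produced by completing the square in $w$), along with the residual quadratic-in-$x$ term, and verifying that everything telescopes to the clean ratio above without sign or exponent errors. A cleaner route that sidesteps some of this is to use the known closed form for the $\chi^p$-type / Rényi divergence between two centered Gaussians applied to the pair $(P_{WX}, P_W P_X)$ viewed as $2$-dimensional Gaussians: $D_\alpha$ between $\mathcal{N}(0,\Sigma_0)$ and $\mathcal{N}(0,\Sigma_1)$ has a standard determinant formula, and one can plug in $\Sigma_0 = \begin{psmallmatrix}\sigma_W^2 & \sigma_W^2\\ \sigma_W^2 & \sigma_W^2+\sigma^2\end{psmallmatrix}$, $\Sigma_1 = \begin{psmallmatrix}\sigma_W^2 & 0\\ 0 & \sigma_W^2+\sigma^2\end{psmallmatrix}$, then convert from Rényi to Hellinger via the one-to-one map $(p-1)(\mathcal{H}_p+1) = \exp((p-1)D_p)$ already cited in the paper; I would present the direct integral computation as the primary argument and mention this matrix-determinant check as confirmation. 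The positivity/finiteness caveat — the identity holds only when $1+(2-p)p\sigma_W^2/\sigma^2>0$, i.e.\ the Gaussian integral converges — should be stated explicitly, and for $p=3/2$ it holds for all $\sigma_W,\sigma$.
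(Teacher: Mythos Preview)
Your proposal is correct and follows essentially the same approach as the paper: write out $(p-1)\mathcal{H}_p(W,X)+1$ as a double Gaussian integral and evaluate it by completing the square first in $w$ and then in $x$. The only cosmetic differences are that the paper works directly in $\mathbb{R}^d$ (rather than reducing to $d=1$ via the tensorization you describe) and expresses the ratio via the likelihood $P_{X|W}$ instead of the posterior $P_{W|X}$; your Rényi-determinant cross-check and the explicit finiteness caveat $1+(2-p)p\,\sigma_W^2/\sigma^2>0$ are nice additions not spelled out in the paper.
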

\begin{proof}
First, note that $X|\{W=w\}\sim\mathcal{N}(w, \sigma^2I_d)$. Since the Hellinger information of order $p$ is defined as $\mathcal{H}_p(W,X) = \mathbb{E}_{P_{W}P_{X}}\left[f\left(\frac{dP_{WX}}{dP_{W}P_{X}}\right)\right]$ with $f(t) = \frac{t^p-1}{p-1}$, we have that
\begin{align}
    (p-1)\mathcal{H}_p&(W, X)+1\notag \\
    &=\int_{\mathbb{R}^d} \int_{\mathbb{R}^d} P_{W}(w)P_{X}(x) \left(\frac{P_{X|W=w}(x)}{P_{X}(x)}\right)^p dwdx \\
    &=\int_{\mathbb{R}^d} P_{X}(x)^{1-p} \int_{\mathbb{R}^d} P_{W}(w) P_{X|W=w}(x)^p dwdx.\label{equation:hellinger_integral_lemma2}
\end{align}
Let us denote the inner-most  integral in \eqref{equation:hellinger_integral_lemma2} as $G_p(x)$. One has that:
\begin{align}
    G_p(x):&= \int_{\mathbb{R}^d}P_{{W}}({w})P_{{X}|{W}={w}}({x})^{p} d{w} \\
    &= \left(\frac{(2\pi\sigma^2)^{-p}}{2\pi \sigma_W^2}\right)^{\frac{d}{2}} \int_{\mathbb{R}^d} e^{-\frac{\|{w}\|_2^2}{2\sigma_W^2} - \frac{p \|{w}-{x}\|_2^2}{2\sigma^2}}d{w}.\label{equation:G_p}
\end{align}
Let $I_p(x):=\int_{\mathbb{R}^d} e^{-\frac{\|{w}\|_2^2}{2\sigma_W^2} - \frac{p \|{w}-{x}\|_2^2}{2\sigma^2}}d{w}$ (and thus, $G_p(x)=\left(\frac{(2\pi\sigma^2)^{-p}}{2\pi \sigma_W^2}\right)^{\frac{d}{2}} I_p(x)$) one has
\begin{align}
    I_p(x)&=\int_{\mathbb{R}^d} e^{-\frac{1}{2\sigma^2} \left(p\|{x}\|_2^2-2p{x}^\top {w} + \left(\frac{\sigma^2}{\sigma_W^2}+p\right)\|{w}\|_2^2\right)}d{w}\\
    &= e^{\frac{-p\cdot \|{x}\|_2^2}{2\sigma^2}}\int_{\mathbb{R}^d} e^{-\frac{1}{2\sigma^2} \left(-2p{x}^\top {w} + \left(\frac{\sigma^2}{\sigma_W^2}+p\right)\|{w}\|_2^2\right)}d{w}\label{equation:I_p}
\end{align}
Let us now add and subtract 
$c\|x\|_2^2$ with $c=-p\left(1+p\frac{\sigma_W^2}{\sigma^2}\right)^{-1}$ in the exponent in Equation \eqref{equation:I_p}:
\begin{align}
    I_p(x) &= e^{{\frac{c\|{x}\|_2^2}{2\sigma^2}}}\int_{\mathbb{R}^d} e^{-\frac{\frac{\sigma^2}{\sigma_W^2}+p}{2\sigma^2} \left(\left\|{w}-\sqrt{\frac{p+c}{\frac{\sigma^2}{\sigma_W^2}+p}}{x}\right\|_2^2\right)}d{w} \\
    &= \exp\left(-{\frac{p\cdot \|{x}\|_2^2}{2\sigma^2\left(1+p\frac{\sigma_W^2}{\sigma^2}\right)}}\right) \left(2\pi \frac{\sigma^2}{\frac{\sigma^2}{\sigma_W^2}+p}\right)^{\frac{d}{2}}\label{equation:I_pCompleted}.
\end{align}
Substituting \eqref{equation:I_pCompleted} in \eqref{equation:G_p} gives
\begin{equation}
    G_p(x)=\frac{1}{(2\pi\sigma^2)^{\frac{dp}{2}}} e^{-\frac{p\|{x}\|_2^2}{2\left(\sigma^2+p\sigma_W^2\right)}} \left(1+p\frac{\sigma_W^2}{\sigma^2}\right)^{-\frac{d}{2}}.\label{equation:G_p_value}
\end{equation}
Finally, if we plug in \eqref{equation:G_p_value} in  \eqref{equation:hellinger_integral_lemma2}, we retrieve that:
\begin{align}
    &(p-1)\mathcal{H}_p(W, X)+1 \notag \\
    &= \int_{\mathbb{R}^d} P_{{X}}({x})^{1-p} \frac{1}{(2\pi \sigma^2)^{\frac{dp}{2}}} e^{-\frac{p\|{x}\|_2^2}{2\left(\sigma^2+p\sigma_W^2\right)}} \left(1+p\frac{\sigma_W^2}{\sigma^2}\right)^{-\frac{d}{2}}d{x}\\
    &= \frac{\left(1+\frac{\sigma_W^2}{\sigma^2}\right)^{\frac{d(p-1)}{2}}}{(2\pi \sigma^2)^{\frac{d}{2}}\left(1+p\frac{\sigma_W^2}{\sigma^2}\right)^{\frac{d}{2}}}\int_{\mathbb{R}^d} e^{\frac{(p-1)\|{x}\|_2^2}{2\left(\sigma^2+\sigma_W^2\right)}-\frac{p\|{x}\|_2^2}{2\left(\sigma^2+p\sigma_W^2\right)}} d{x} \\
    &= \frac{\left(1+\frac{\sigma_W^2}{\sigma^2}\right)^{\frac{d(p-1)}{2}}}{(2\pi \sigma^2)^{\frac{d}{2}}\left(1+p\frac{\sigma_W^2}{\sigma^2}\right)^{\frac{d}{2}}}\int_{\mathbb{R}^d} e^{-\frac{\|{x}\|_2^2}{2}\left(\frac{1-p}{\sigma^2+\sigma_W^2}+\frac{p}{\sigma^2+p\sigma_W^2}\right)} d{x} \\
    &= \frac{\left(1+\frac{\sigma_W^2}{\sigma^2}\right)^{\frac{d(p-1)}{2}}}{(2\pi \sigma^2)^{\frac{d}{2}}\left(1+p\frac{\sigma_W^2}{\sigma^2}\right)^{\frac{d}{2}}} \left(\frac{2\pi}{\frac{1-p}{\sigma^2+\sigma_W^2}+\frac{p}{\sigma^2+p\sigma_W^2}}\right)^{\frac{d}{2}} \\
    &= \frac{\left(1+\frac{\sigma_W^2}{\sigma^2}\right)^{\frac{d(p-1)}{2}}}{\left(\sigma^2+p\sigma_W^2\right)^{\frac{d}{2}}} \left(\frac{1}{\frac{1-p}{\sigma^2+\sigma_W^2}+\frac{p}{\sigma^2+p\sigma_W^2}}\right)^{\frac{d}{2}} \\
    &= \left(\frac{\left(1+\frac{\sigma_W^2}{\sigma^2}\right)^{p-1}}{\frac{(1-p)\left(\sigma^2+p\sigma_W^2\right)}{\sigma^2+\sigma_W^2}+p} \right)^{\frac{d}{2}} \\
    &=  \left(\frac{\left(1+\frac{\sigma_W^2}{\sigma^2}\right)^p}{1 + (2-p)p\frac{\sigma_W^2}{\sigma^2}} \right)^{\frac{d}{2}},
\end{align}
which concludes the proof.
\end{proof}
\fi
\bibliographystyle{IEEEtran}
\bibliography{sample}

\end{document}